\newtheorem{definition}{Definition}
\newtheorem{problem}{Problem}
\newtheorem{lemma}{Lemma}
\def\BibTeX{{\rm B\kern-.05em{\sc i\kern-.025em b}\kern-.08em
    T\kern-.1667em\lower.7ex\hbox{E}\kern-.125emX}}
\begin{document}
\newcommand{\shengxin}[1]{{\color{orange}#1}\xspace}
\newcommand{\yinyu}[1]{{\color{cyan}#1}\xspace}
\newcommand{\Yui}{\color{blue}}

\title{Temporal $k$-Core Query, Revisited}

\author{\IEEEauthorblockN{Yinyu Liu$^{\hspace{.1em}\dag}$, Kaiqiang Yu$^{\hspace{.1em}\ddag}$, Shengxin Liu$^{\hspace{.1em}\dag}$, Cheng Long$^{\hspace{.1em}\ddag}$, Zhaoquan Gu$^{\hspace{.1em}\dag}$}
 \vspace{.4em}
 \IEEEauthorblockA{\textit{$^\dag$Harbin Institute of Technology, Shenzhen, China} \\
  \textit{$^\ddag$Nanyang Technological University, Singapore} \\ 
  \vspace{4pt}
  \{23s151053@stu., sxliu@, guzhaoquan@\}hit.edu.cn, \{kaiqiang002@e., c.long@\}ntu.edu.sg}
}

\maketitle

\begin{abstract}

Querying cohesive subgraphs in temporal graphs is essential for understanding the dynamic structure of real-world networks, such as evolving communities in social platforms, shifting hyperlink structures on the Web, and transient communication patterns in call networks. Recently, research has focused on the temporal $k$-core query, which aims to identify all $k$-cores across all possible time sub-intervals within a given query interval.
The state-of-the-art algorithm \texttt{OTCD} mitigates redundant computations over overlapping sub-intervals by exploiting inclusion relationships among $k$-cores in different time intervals. Nevertheless, \texttt{OTCD} remains limited in scalability due to the combinatorial growth in interval enumeration and repeated processing. In this paper, we revisit the temporal $k$-core query problem and introduce a novel algorithm \texttt{CoreT}, which dynamically records the earliest timestamp at which each vertex or edge enters a $k$-core. This strategy enables substantial pruning of redundant computations. As a result, \texttt{CoreT} requires only a single pass over the query interval and achieves improved time complexity, which is linear in both the number of temporal edges within the query interval and the duration of the interval, making it highly scalable for long-term temporal analysis. Experimental results on large real-world datasets show that \texttt{CoreT} achieves up to four orders of magnitude speedup compared to the existing state-of-the-art \texttt{OTCD}, demonstrating its effectiveness and scalability for temporal $k$-core analysis.
\end{abstract}

\begin{IEEEkeywords}
Temporal graphs, core query, cohesive subgraph mining.
\end{IEEEkeywords}

\section{Introduction}
Many real-world networks are highly dynamic, with structures continually evolving. Such dynamics are exemplified by social networks~\cite{batagelj2011fast}, where users frequently join or leave and connections are constantly formed or dissolved. Similarly, Web graphs~\cite{vo2021efficient} undergo continual changes, with hyperlinks being created and removed as content evolves. Customer call graphs~\cite{calzada2020evaluation} also exhibit these patterns, as interactions emerge or fade when individuals expand or reduce their networks of contacts.
\emph{Temporal graphs} are particularly well-suited for modeling these evolving interactions~\cite{masuda2016guide,newman2010networks,brandes2005network}. In such graphs, each temporal edge is associated with a \emph{timestamp} that indicates when an interaction occurred, as illustrated in Figure~\ref{fig:graph}. Unlike static graphs, temporal graphs capture the dynamic evolution of relationships by recording when edges are added, modified, or removed over time. This dynamic representation makes them valuable for modeling phenomena such as financial transactions~\cite{batrancea2024financial}, temporal-aware recommendation systems~\cite{xia2022contemporary}, and infectious disease tracking~\cite{ciaperoni2020relevance}.

\begin{figure*}
    \centering
    \begin{subfigure}
        [t]{0.24\textwidth}
        \centering
        \includegraphics[width=0.9\textwidth]{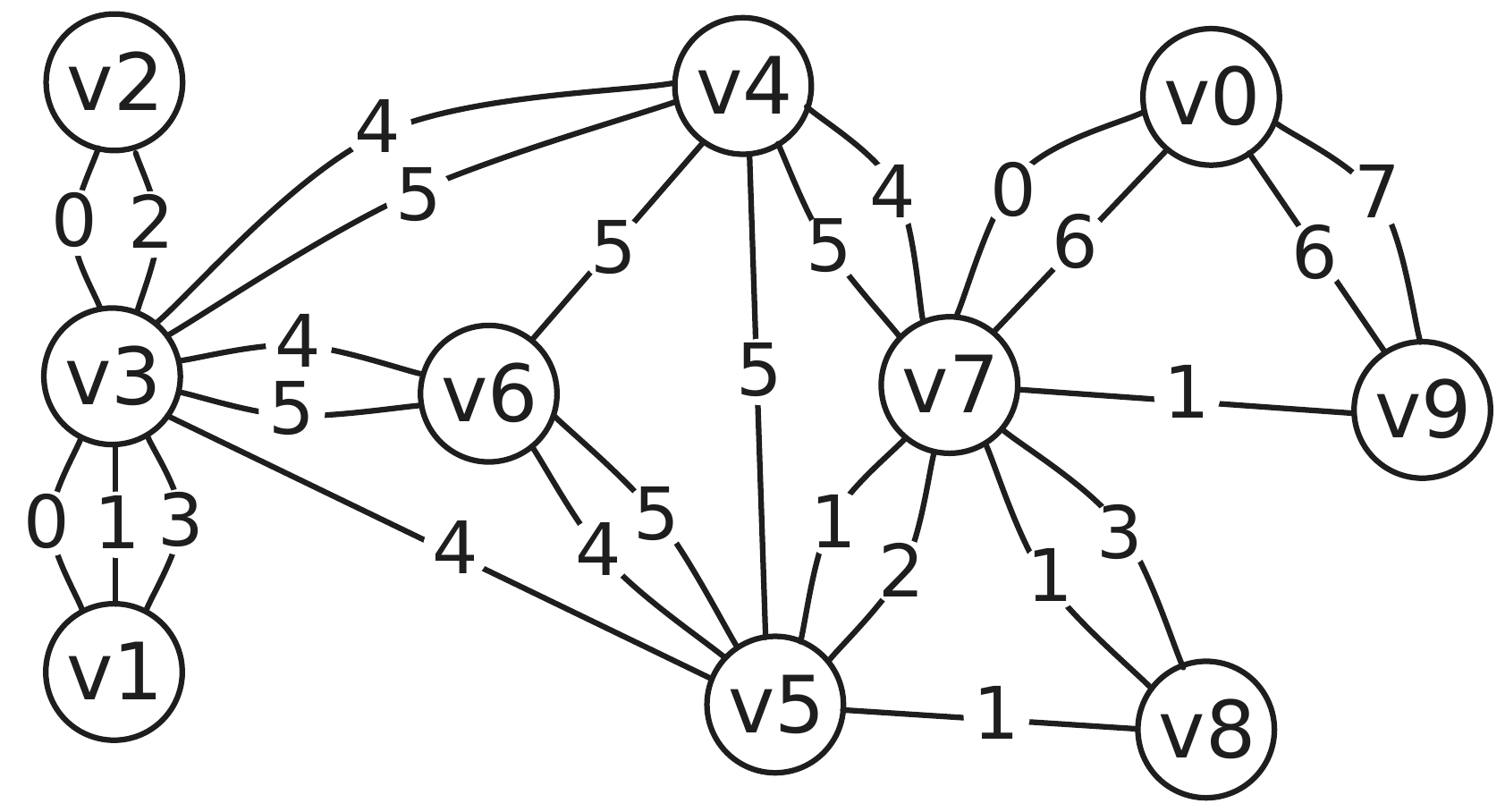}
        \caption{Temporal graph $\mathcal G$}
        \label{fig:graph}
    \end{subfigure}
    \begin{subfigure}
        [t]{0.24\textwidth}
        \centering
        \includegraphics[width=0.9\textwidth]{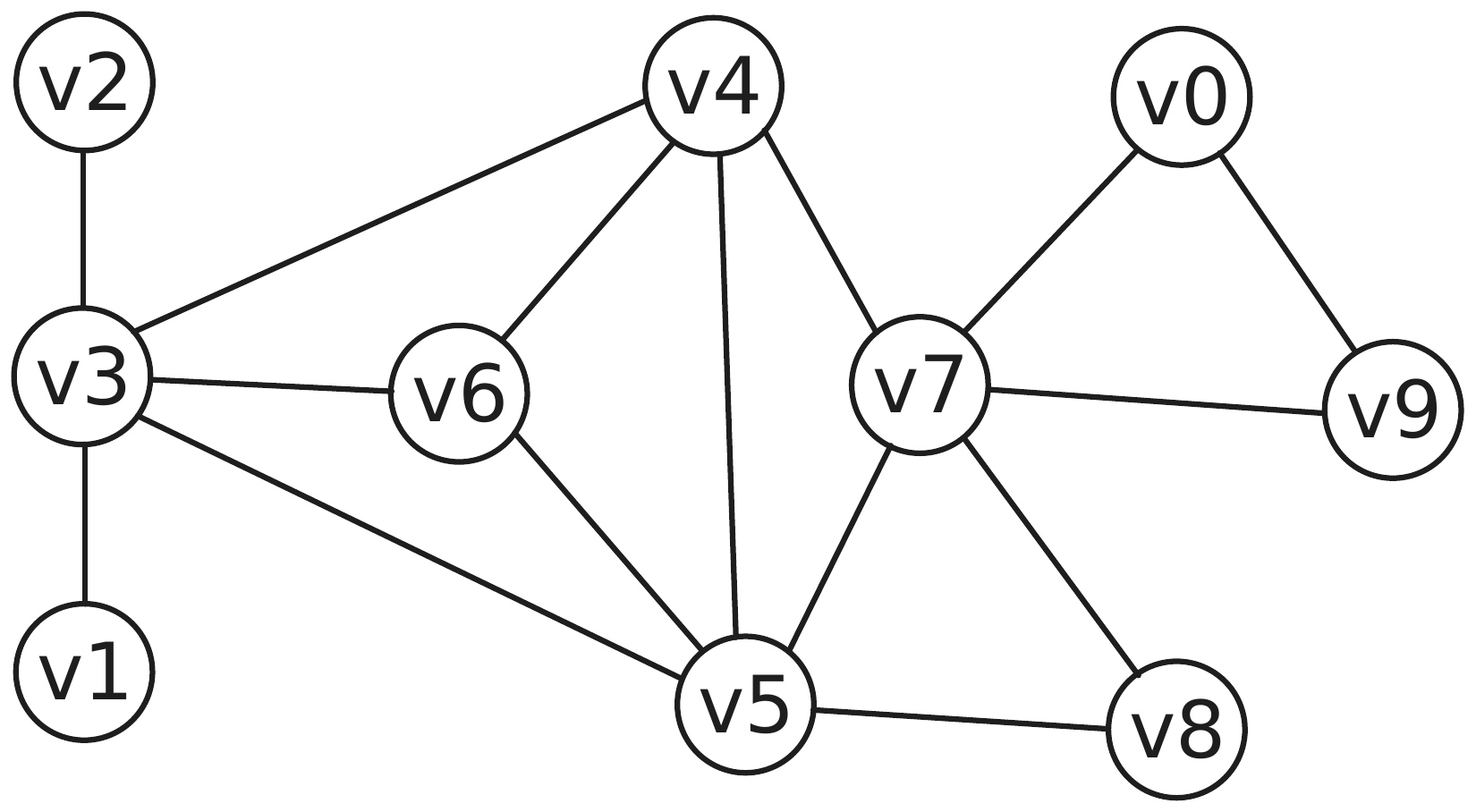}
        \caption{Underlying static graph $G$}
        \label{fig:underlying}
    \end{subfigure}
    \hfill
    \begin{subfigure}
        [t]{0.24\textwidth}
        \centering
        \includegraphics[width=0.9\textwidth]{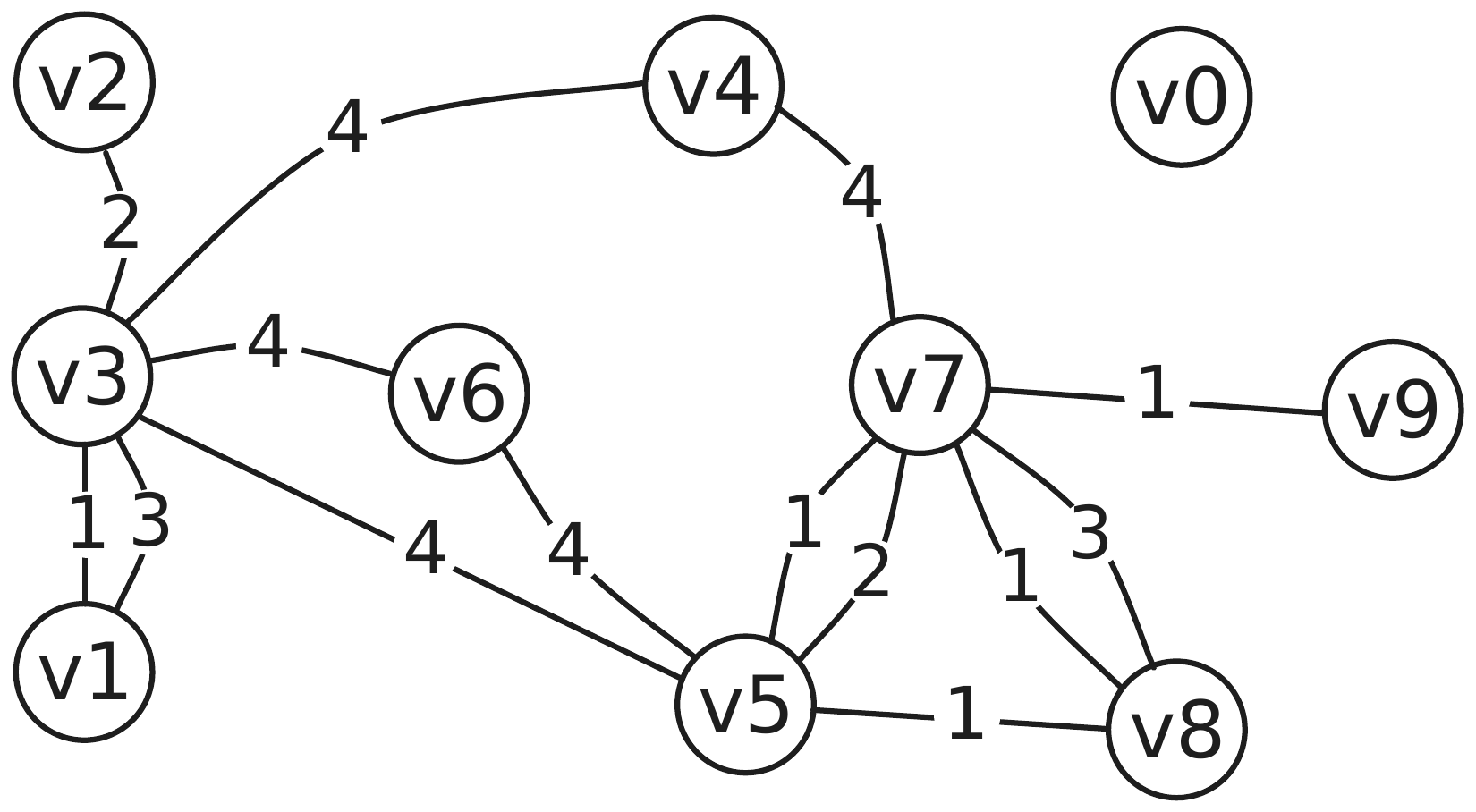}
        \caption{Projected graph over $[1,4]$}
        \label{fig:projected}
    \end{subfigure}
    \hfill
    \begin{subfigure}
        [t]{0.24\textwidth}
        \centering
    \includegraphics[width=0.9\textwidth]{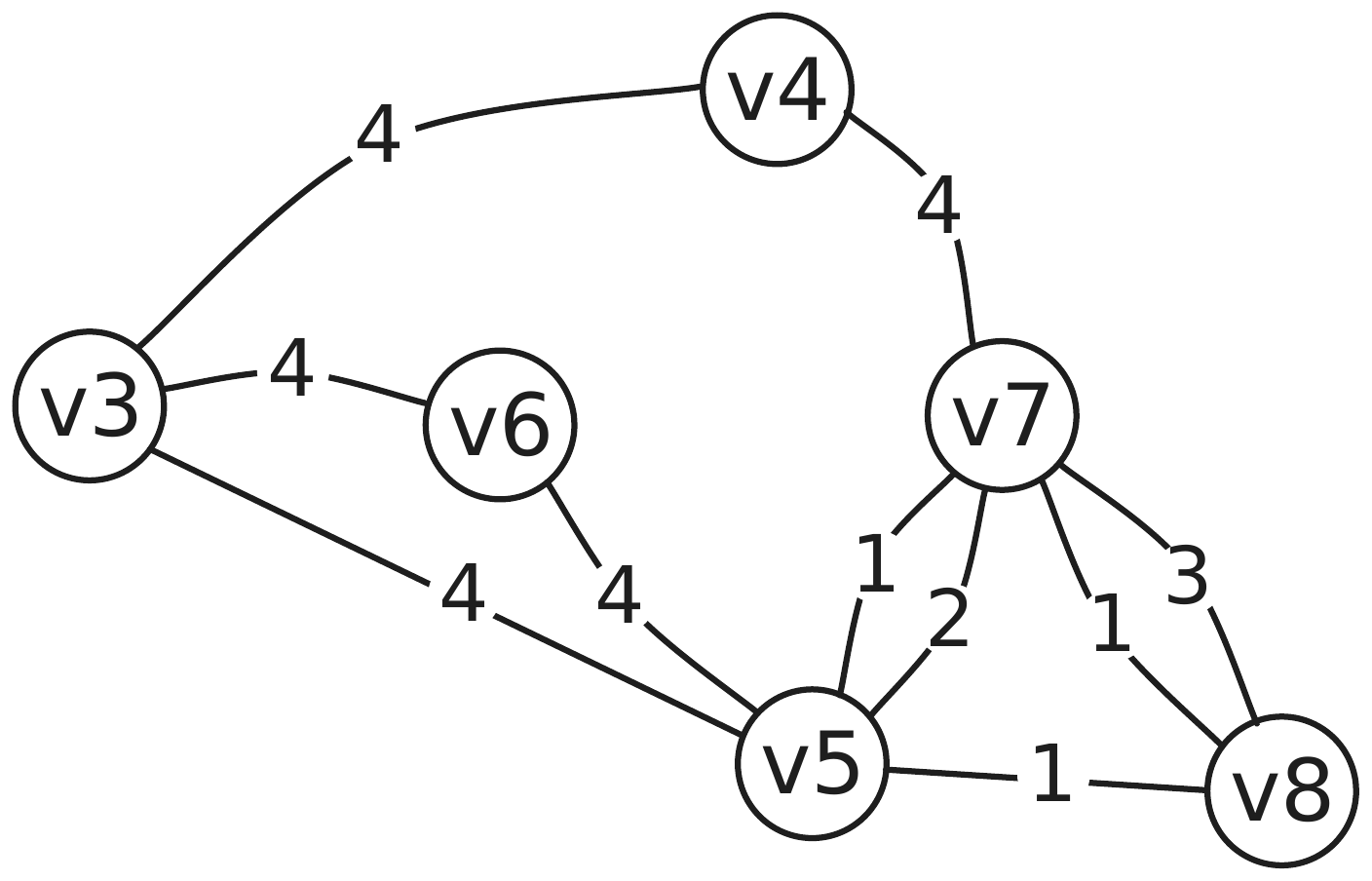}
        \caption{Temporal $2$-core in $\mathcal{G}_{[1,4]}$}
        \label{fig:core-[1,7]}
    \end{subfigure}
    \caption{A running illustrative example of key concepts in temporal $k$-core analysis: (a) the temporal graph $\mathcal{G}$, (b) its underlying static graph $G$, (c) the projected graph over $[1,4]$, and (d) the temporal $2$-core in $\mathcal{G}_{[1,4]}$.}
\end{figure*}

A fundamental task in the analysis of both static and temporal graphs is the discovery of cohesive subgraphs, which is crucial for revealing communities, detecting anomalies, and understanding functional modules within complex systems~\cite{mang2024efficient,khurana2013efficient,semertzidis2016durable,semertzidis2015timereach,wen2020efficiently}. Among the various models for cohesive subgraph mining, the $k$-core~\cite{seidman1983k-core} is particularly notable for its computational efficiency and clear structural interpretation. Formally, a $k$-core is a maximal subgraph in which every vertex connects to at least $k$ other vertices, effectively filtering out sparsely connected regions and emphasizing denser areas of the graph.
Extending cohesive subgraph mining to temporal graphs enables the identification of stable communities, persistent collaborations, and recurring behavioral patterns over time. However, incorporating the temporal dimension introduces significant complexity: the number of sub-intervals increases quadratically with the length of the time interval, substantially increasing the computational cost of enumerating all $k$-cores that appear in every interval.

In this work, we revisit the problem of temporal $k$-core query in temporal graphs, aiming to identify all distinct $k$-cores across all possible time intervals $[t_s,t_e]$ within a given query interval $[T_s,T_e]$, i.e., for all $[t_s,t_e] \subseteq [T_s,T_e]$. To address the associated computational challenges, the current state-of-the-art algorithm \texttt{OTCD}~\cite{yang2023scalable} adopts an incremental approach that exploits inclusion relationships among $k$-cores in different time intervals; specifically, the temporal $k$-core over a given interval contains the temporal $k$-core of any of its sub-intervals. Although \texttt{OTCD} reduces some redundant computations, it remains inefficient for large-scale temporal graphs due to excessive repeated processing and the combinatorial explosion in the number of intervals (e.g., inherently requiring $O(|T_e - T_s|^2)$ steps in the worst case). 

To overcome these limitations, we propose a novel algorithm \texttt{CoreT} for efficiently querying temporal $k$-cores. The key insight behind our approach is to dynamically maintain the earliest timestamp at which a vertex or edge becomes part of a $k$-core, and to leverage this information to substantially reduce unnecessary processing. Specifically, our proposed algorithm \texttt{CoreT} operates in two phases: (1) the initialization phase \texttt{CoreT\_Init} that identifies the structural states across the full time span of the query interval $[T_s,T_e]$, and (2) the iterative update phase \texttt{CoreT\_Update} that incrementally updates the graph as time advances, thereby enabling efficient querying of all valid temporal $k$-cores.
By more thoroughly exploiting the inclusion property of temporal $k$-cores, our algorithm \texttt{CoreT} avoids redundant computations and requires only a single pass over all timestamps in the query interval to identify all distinct temporal $k$-cores. This enables efficient tracking of the evolution of $k$-cores over time, even in large-scale networks. We show that this strategy allows for the efficient querying of temporal $k$-cores with the total time complexity of $O(|\mathcal E_{[T_s, T_e]}| \times |T_e - T_s|)$, where $\mathcal E_{[T_s, T_e]}$ denotes the set of temporal edges whose timestamps fall within $[T_s, T_e]$ in the given temporal graph.

Finally, we validate the effectiveness and scalability of our proposed method \texttt{CoreT} through extensive experiments on real-world datasets. The experimental results demonstrate substantial improvements in computational efficiency, with speedups of up to four orders of magnitude compared to the existing state-of-the-art algorithm \texttt{OTCD}. This significant performance gain underlines the practicality of \texttt{CoreT} for long-term temporal analyses in large-scale networks.

\section{Preliminaries}

\begin{figure*}
    \centering
    \includegraphics[width=\linewidth]{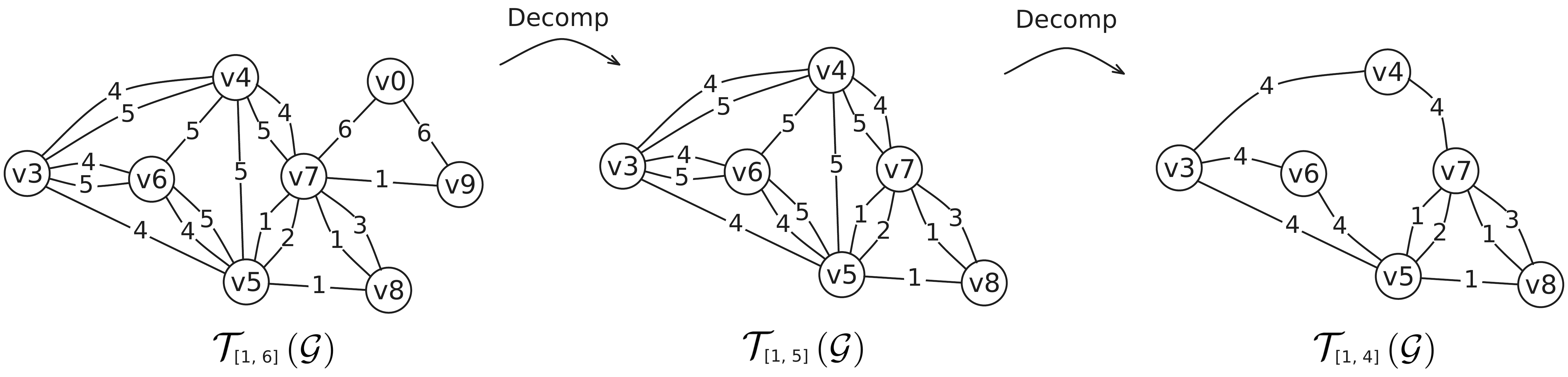}
    \caption{Illustration of \texttt{OTCD}~\cite{yang2023scalable} applied to the temporal graph in Figure~\ref{fig:graph} with $k$=2 and the query interval $[1,6]$: the procedure \texttt{Decomp} decomposes $\mathcal{T}{[1,6]}(\mathcal{G})$ to $\mathcal{T}{[1,5]}(\mathcal{G})$, and then to $\mathcal{T}_{[1,4]}(\mathcal{G})$.}
    \label{fig:decomp}
\end{figure*}

\subsection{Problem Definition}
We consider a temporal graph  $\mathcal{G}=(V, \mathcal{E})$ (which is an undirected graph with multiple parallel edges), where $V$ is the set of vertices and $\mathcal{E}$ is the set of temporal edges. Each temporal edge $(u,v,t)\in \mathcal{E}$ represents an interaction between the vertices $u$ and $v$ at timestamp $t$. Following prior work~\cite{bentert2019listing,galimberti2018mining,yang2023scalable}, we assume that timestamps are consecutive integers within a finite range $[0, t_{\max}]$, where $t_{\max}$ denotes the latest timestamp. To illustrate, see an example in Figure~\ref{fig:graph}.

Given a time interval $[t_s,t_e]$, we define the \emph{projected graph} of $\mathcal{G}$ over $[t_s,t_e]$ by $\mathcal{G}_{[ts,te]}=(V, \mathcal E_{[ts,te]})$, where $\mathcal E_{[ts,te]}$ consists of all those temporal edges with timestamps falling in the interval $[t_s,t_e]$, i.e., $\mathcal E_{[ts,te]}=\{(u,v,t) \mid (u,v,t)\in \mathcal{E}\wedge t\in[ts,te]\}$. For illustration, see an example of the projected graph (of Figure~\ref{fig:graph}) over $[1,4]$ in Figure~\ref{fig:projected}. In addition, we define the \emph{underlying static graph} (a.k.a. \emph{detemporalized graph}) of $\mathcal{G}_{[t_s,t_e]}$ by a simple undirected graph $G_{[t_s,t_e]}=(V,E_{[t_s,t_e]})$, where $E_{[t_s,t_e]}=\{(u,v)\mid (u,v,t)\in \mathcal{E}_{[t_s,t_e]}\}$. In particular, we abbreviate the underlying static graph of $\mathcal{G}$ (i.e., $G_{[0,t_{max}]}$) as $G=(V,E)$ throughout the paper. To illustrate, see the underlying static graph of Figure~\ref{fig:graph} in Figure~\ref{fig:underlying}.

We focus on one well-known cohesive subgraph structure, namely $k$-core, which is defined traditionally on static graphs~\cite{seidman1983k-core}. Specifically, given a graph $G$ and an integer $k$, a $k$-core is a maximal induced subgraph of $G$ in which every vertex has at least $k$ neighbors. Recent study~\cite{yang2023scalable} extends the $k$-core model to temporal graphs and proposes the \emph{temporal $k$-core model} (See an example in Figure~\ref{fig:core-[1,7]} for a visual illustration of the following definition).

\begin{definition}[Temporal $k$-core~\cite{yang2023scalable}]
Given a time interval $[t_s, t_e]$ and a positive integer $k$, a temporal $k$-core over $[t_s, t_e]$ in $\mathcal{G}$, denoted by $\mathcal{T}_{[t_s, t_e]}(\mathcal{G})$, is defined as a maximal induced subgraph of $\mathcal{G}_{[t_s, t_e]}$ where each vertex has at least $k$ neighbors.
\end{definition}

Compared to traditional $k$-cores (which are essentially static subgraphs), temporal $k$-cores are defined as temporal subgraphs, which captures both the temporal cohesiveness (i.e., all edges appear in a user-specified time interval $[t_s,t_e]$) and structural cohesiveness (i.e., every vertex has at least $k$ neighbors). Motivated by this, we revisit the \emph{(time-range) temporal $k$-core query} problem, which seeks to extract temporal $k$-cores from temporal graphs and has been studied in the literature~\cite{yang2023scalable}. Formally, we formulate the problem as below.

\begin{problem}[(Time-range) temporal $k$-core query~\cite{yang2023scalable}]
Given a temporal graph $\mathcal{G}$, a time interval $[T_s, T_e]$, and a positive integer $k$, the goal is to find all distinct temporal $k$-cores $\mathcal{T}_{[t_s, t_e]}(\mathcal{G})$ such that $[t_s, t_e] \subseteq [T_s, T_e]$.
\end{problem}

We note that temporal $k$-cores induced by different sub-intervals of $[T_s,T_e]$, such as $[t_s,t_e]$ and $[t_s',t_e']$, may correspond to the same subgraph of $\mathcal{G}$, i.e., $\mathcal{T}_{[t_s,t_e]}(\mathcal{G}) = \mathcal{T}_{[t_s',t_e']}(\mathcal{G})$. To reduce redundancy, the (time-range) temporal $k$-core query only returns distinct temporal $k$-cores. In addition, we define the size of query time interval $[T_s,T_e]$ as follows:
\begin{equation}
    \Delta = T_e - T_s + 1.
\end{equation}
Clearly, a temporal $k$-core query contains at most $\Delta(\Delta+1)/2$ distinct temporal $k$-cores, each of which corresponds to a sub-interval of $[T_s,T_e]$.

\smallskip
\noindent\textbf{Discussion}. We note that there exists another adaptation of the $k$-core model on temporal graphs, namely the \emph{historical $k$-core}, in the literature~\cite{yu2021querying}. Specifically, given a time interval $[t_s,t_e]$ and a positive integer $k$, a historical $k$-core is defined as a $k$-core in the detemporalized graph $G_{[t_s,t_e]}$. Thus, a historical $k$-core is essentially the detemporalized version of a temporal $k$-core, and the temporal $k$-core model generalizes the historical $k$-core model. In addition, the historical $k$-core query problem, which aims to find the historical $k$-core for a query time interval $[t_s,t_e]$, is studied in recent work~\cite{yu2021querying}. Although the proposed techniques for historical $k$-core queries can be adapted to solve the temporal $k$-core query problem, such adaptions are not efficient~\cite{yang2023scalable}. We provide further details in Section~\ref{sec:related} on related work.

\subsection{State-of-the-art Algorithms}
Given a time interval $[t_s,t_e]$, we note that the temporal $k$-core $\mathcal{T}_{[t_s,t_e]}(G)$ can be easily obtained by removing from $\mathcal{G}$ all those temporal edges with timestamps $t$ outside $[t_s,t_e]$ (i.e., $t<t_s$ or $t>t_e$) and then iteratively removing from the remaining graph those vertices (together with the incident edges) with the number of neighbors less than $k$~\cite{yang2023scalable}. We call the above procedure \texttt{Decomp}$(\mathcal{G},[t_s,t_e])$. To answer the temporal $k$-core query, one straightforward method is to enumerate all possible sub-intervals $[t_s,t_e]$ of $[T_s,T_e]$ and compute the temporal $k$-core $\mathcal{T}_{[t_s,t_e]}(\mathcal{G})$ for each sub-interval by invoking the procedure \texttt{Decomp} with the parameter $[t_s,t_e]$ on $\mathcal{G}$. To boost the efficiency, the state-of-the-art method, called \texttt{OTCD}~\cite{yang2023scalable}, explores the following nested properties.

\begin{lemma}[\cite{yang2023scalable}]
\label{lemma:nested-property} 
    Given a temporal graph $\mathcal{G}$ and two intervals $[t_s,t_e]$ and $[t_s',t_e']$ such that $[t_s,t_e]\subseteq [t_s',t_e']$, the temporal $k$-core $\mathcal{T}_{[t_s,t_e]}(\mathcal{G})$ is a subgraph of temporal $k$-core $\mathcal{T}_{[t_s',t_e']}(\mathcal{G})$ and is identical to the temporal $k$-core $\mathcal{T}_{[t_s,t_e]}(\mathcal{T}_{[t_s',t_e']}(\mathcal{G}))$, i.e.,
    \begin{equation}
        \mathcal{T}_{[t_s,t_e]}(\mathcal{G}) = \mathcal{T}_{[t_s,t_e]}(\mathcal{T}_{[t_s',t_e']}(\mathcal{G})) \subseteq \mathcal{T}_{[t_s',t_e']}(\mathcal{G}).
    \end{equation}
\end{lemma}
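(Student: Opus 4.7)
The plan is to derive both claims from a single underlying principle: for any (temporal) graph $H$, the $k$-core of $H$ is the unique maximal induced subgraph in which every vertex has at least $k$ neighbors. Consequently, any induced subgraph of $H$ whose minimum degree is at least $k$ is contained in the $k$-core of $H$. I will first establish the inclusion $\mathcal{T}_{[t_s,t_e]}(\mathcal{G}) \subseteq \mathcal{T}_{[t_s',t_e']}(\mathcal{G})$, and then use it to obtain the equality via two symmetric maximality arguments.

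For the inclusion, note that $[t_s,t_e] \subseteq [t_s',t_e']$ implies $\mathcal{E}_{[t_s,t_e]} \subseteq \mathcal{E}_{[t_s',t_e']}$, and hence $\mathcal{G}_{[t_s,t_e]}$ is an edge-subgraph of $\mathcal{G}_{[t_s',t_e']}$ on the same vertex set. Viewing $\mathcal{T}_{[t_s,t_e]}(\mathcal{G})$ as an induced subgraph of $\mathcal{G}_{[t_s',t_e']}$, the edges with timestamps in $[t_s',t_e'] \setminus [t_s,t_e]$ can only increase the degree of each of its vertices, so each still retains at least $k$ neighbors. Maximality of $\mathcal{T}_{[t_s',t_e']}(\mathcal{G})$ then yields $\mathcal{T}_{[t_s,t_e]}(\mathcal{G}) \subseteq \mathcal{T}_{[t_s',t_e']}(\mathcal{G})$.

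Write $H := \mathcal{T}_{[t_s',t_e']}(\mathcal{G})$ for brevity. On one hand, since $H \subseteq \mathcal{G}_{[t_s',t_e']}$, restricting to $[t_s,t_e]$ gives $H_{[t_s,t_e]} \subseteq \mathcal{G}_{[t_s,t_e]}$, so $\mathcal{T}_{[t_s,t_e]}(H)$ is an induced subgraph of $\mathcal{G}_{[t_s,t_e]}$ with minimum degree at least $k$; by maximality, $\mathcal{T}_{[t_s,t_e]}(H) \subseteq \mathcal{T}_{[t_s,t_e]}(\mathcal{G})$. Conversely, the first part ensures $\mathcal{T}_{[t_s,t_e]}(\mathcal{G}) \subseteq H$, and every edge of $\mathcal{T}_{[t_s,t_e]}(\mathcal{G})$ already carries a timestamp in $[t_s,t_e]$, so in fact $\mathcal{T}_{[t_s,t_e]}(\mathcal{G}) \subseteq H_{[t_s,t_e]}$ with minimum degree at least $k$; applying maximality inside $H$ yields $\mathcal{T}_{[t_s,t_e]}(\mathcal{G}) \subseteq \mathcal{T}_{[t_s,t_e]}(H)$. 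Combining both inclusions gives the claimed identity.

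The main obstacle is notational rather than conceptual: one must be careful that the \emph{induced subgraph} and the \emph{degree} of a vertex are measured with respect to the underlying simple graph rather than the multiset of parallel temporal edges, so that intersecting a $k$-core with a sub-interval preserves the degree lower bound in the intended sense. Once this convention is pinned down, both directions collapse to a single application of the maximality property of the $k$-core operator.
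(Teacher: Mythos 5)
The paper does not actually supply a proof of this lemma---it cites \cite{yang2023scalable} and states only that the property ``can be easily verified''---so there is no in-paper argument to compare against. Your proof is correct and is the natural way to fill in that omission: both the containment and the idempotence-style identity follow from the single fact that the $k$-core of a graph is the unique maximal induced subgraph of minimum degree at least $k$, so any vertex set inducing a subgraph of minimum degree at least $k$ lies inside it; your two symmetric maximality applications (using that $\mathcal{T}_{[t_s',t_e']}(\mathcal{G})$ is \emph{induced} in $\mathcal{G}_{[t_s',t_e']}$, hence $\bigl(\mathcal{T}_{[t_s',t_e']}(\mathcal{G})\bigr)_{[t_s,t_e]}$ is induced in $\mathcal{G}_{[t_s,t_e]}$) are exactly what is needed, and your remark that ``degree'' must be read as the number of distinct neighbors in the detemporalized graph correctly pins down the one convention that could trip up the argument.
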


Lemma~\ref{lemma:nested-property} can be easily verified; thus, we omit the detailed proofs. By Lemma~\ref{lemma:nested-property}, when answering the temporal $k$-core query, one can compute a temporal $k$-core $\mathcal{T}_{[t_s,t_e]}(\mathcal{G})$ via invoking the procedure \texttt{Decomp} on a \emph{smaller} graph $\mathcal{T}_{[t_s',t_e']}(\mathcal{G})$ obtained previously (formally, \texttt{Decomp}$(\mathcal{T}_{[t_s',t_e']}(\mathcal{G}),[t_s,t_e])$) instead of the original input graph $\mathcal{G}$.

Motivated by the above observation, \texttt{OTCD} computes the temporal $k$-core $\mathcal{T}_{[t_s,t_e]}(\mathcal{G})$ for each possible sub-interval of $[T_s,T_e]$ in a certain order, and each one (excepting the first one) builds upon the previously obtained temporal $k$-core, as shown in Figure~\ref{fig:decomp}. 
Specifically, \texttt{OTCD} runs in $\Delta$ rounds. The $i$-th $(0\leq i\leq \Delta-1)$ round iteratively and sequentially computes the temporal $k$-cores induced by sub-intervals starting from the timestamp $T_s+i$, i.e., $[T_s+i,T_e], [T_s+i,T_e-1],...,[T_s+i,T_s+i]$. Following this ordering, the temporal $k$-core $\mathcal{T}_{[T_s+i,t_e]}(\mathcal{G})$ $(T_s+i\leq t_e<T_e)$ can be derived from the previously obtained one $\mathcal{T}_{[T_s+i,t_e+1]}(\mathcal{G})$ (via the the procedure \texttt{Decomp}) based on Lemma~\ref{lemma:nested-property}. In addition, the first one in the ordering $\mathcal{T}_{[T_s+i,T_e]}$ can be induced based on the one $\mathcal{T}_{[T_s+i-1,T_e]}$ obtained in the previous round (when $1\leq i\leq \Delta -1$) or can be obtained directly via \texttt{Decomp}$(\mathcal{G},[T_s,T_e])$ (when $i=0$). 
%
%
%
\texttt{OTCD} further incorporates pruning rules for pruning those redundant sub-intervals that induce identical temporal $k$-cores (details refer to~\cite{yang2023scalable}). Furthermore, it designs a data structure called Temporal Edge List (TEL) for managing the temporal graphs (in memory), which helps to boost the efficiency of edge removal and degree tracking when conducting \texttt{Decomp}. To ease the presentation, we omit the details of TEL.

\smallskip
\noindent\textbf{Time complexity}. The time cost of \texttt{OTCD} is primarily dominated by the iterative execution of \texttt{Decomp}. Specifically, it requires maintaining the TEL data structure (for tracking the degree of each vertex and deleting edges efficiently in $O(1)$) when iteratively deleting temporal edges and/or vertices during \texttt{Decomp}. Deleting an edge (resp. a vertex) from $\mathcal{G}$ takes $O(c)$ (resp. $O(\log|V|)$) for updating the TEL where $c$ is a small constant, which dominates the time cost of \texttt{Decomp}~\cite{yang2023scalable}. Recall that \texttt{OTCD} runs in $\Delta$ rounds and the $i$-th round $(0\leq i\leq \Delta-1)$ deletes $O(|\mathcal{E}_{[T_s+i,T_e]}|)$ edges and $O(|V|)$ vertices from $\mathcal{T}_{[T_s+i,T_e]} (\mathcal{G})$ in the worst case. Therefore, the time complexity of \texttt{OTCD} is bounded by $\sum_{i=T_s}^{T_e} (|V|+|\mathcal{E}_{[i,T_e]}|)\log(|V|)+c|\mathcal{E}_{[i,T_e]}|$~\cite{yang2023scalable}.

\section{Our Core-Time-based Method: \texttt{CoreT}}
\subsection{Motivation and Overview of \texttt{CoreT}}
We note that the existing method \texttt{OTCD} highly relies on the procedure \texttt{Decomp}, which will be invoked up to $O(\Delta^2)$ times and thus dominates the overall time cost of \texttt{OTCD}. However, \texttt{Decomp} is not efficient since it necessitates to maintain the TEL data structure to track vertex degrees and efficiently delete edges during the iterative removal of temporal edges or vertices. To address this limitation, we propose a new framework, namely \texttt{CoreT}, which operates independently of \texttt{Decomp}. Below, we elaborate on the details.

\begin{figure*}[t]
    \centering
    \begin{subfigure}
        [t]{0.24\textwidth}
        \centering
        \includegraphics[width=\textwidth]{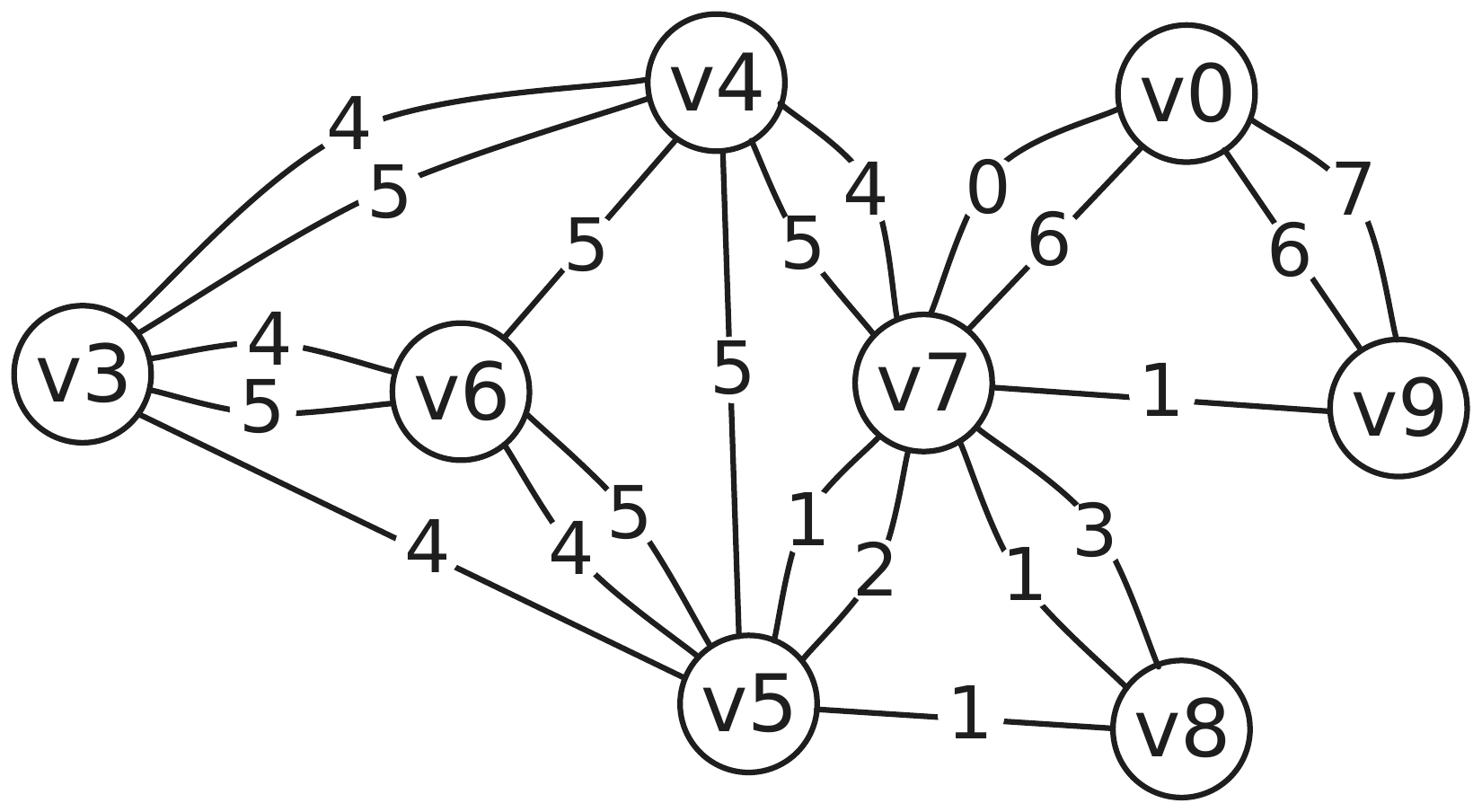}
        \caption{Temporal $2$-core in $[0,7]$}
        \label{fig:core-[0,7]}
    \end{subfigure}
    \hfill
    \begin{subfigure}
        [t]{0.24\textwidth}
        \centering
        \includegraphics[width=\textwidth]{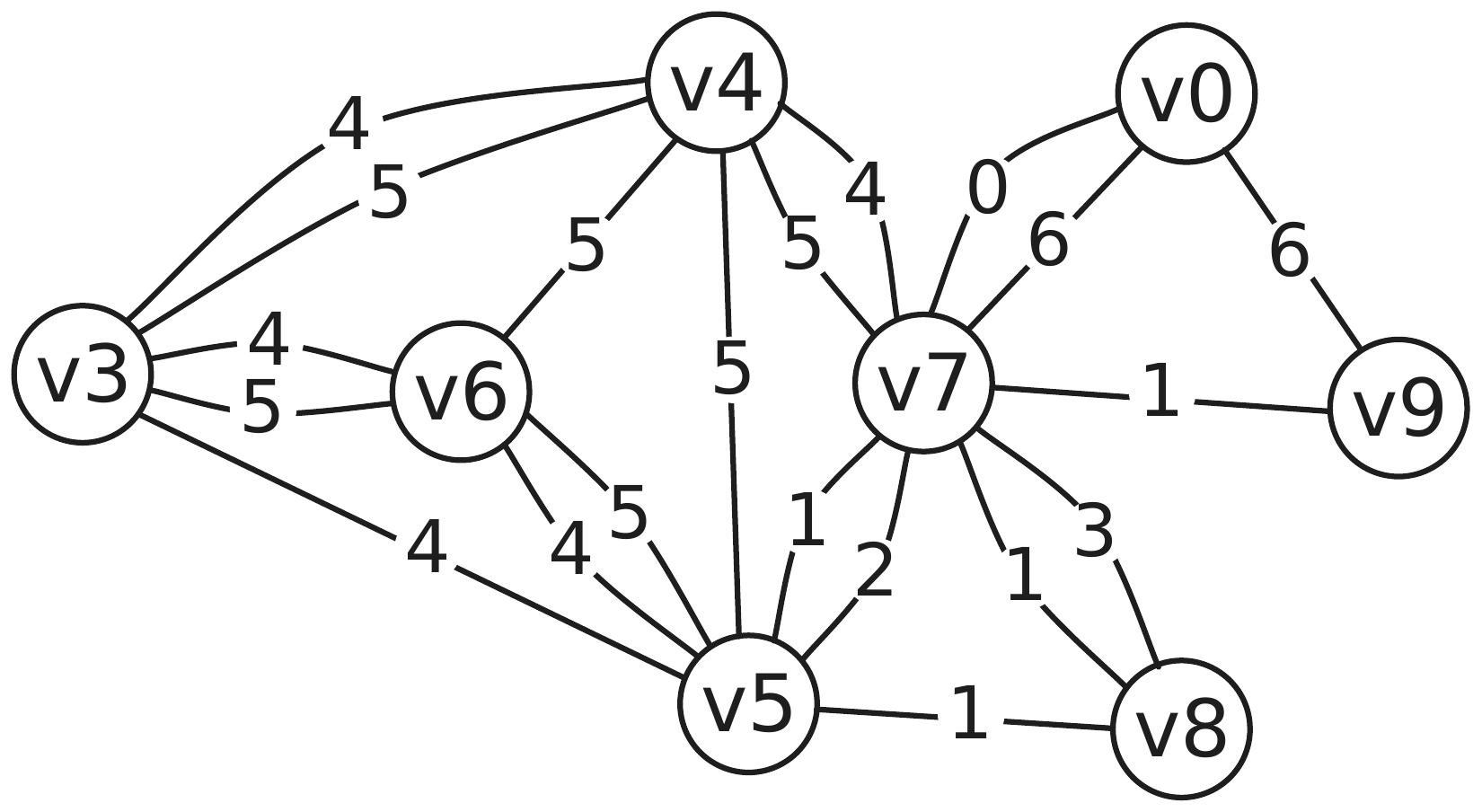}
        \caption{Temporal $2$-core in $[0,6]$}
        \label{fig:core-[0,6]}
    \end{subfigure}
    \hfill
    \begin{subfigure}
        [t]{0.24\textwidth}
        \centering
        \includegraphics[width=\textwidth]{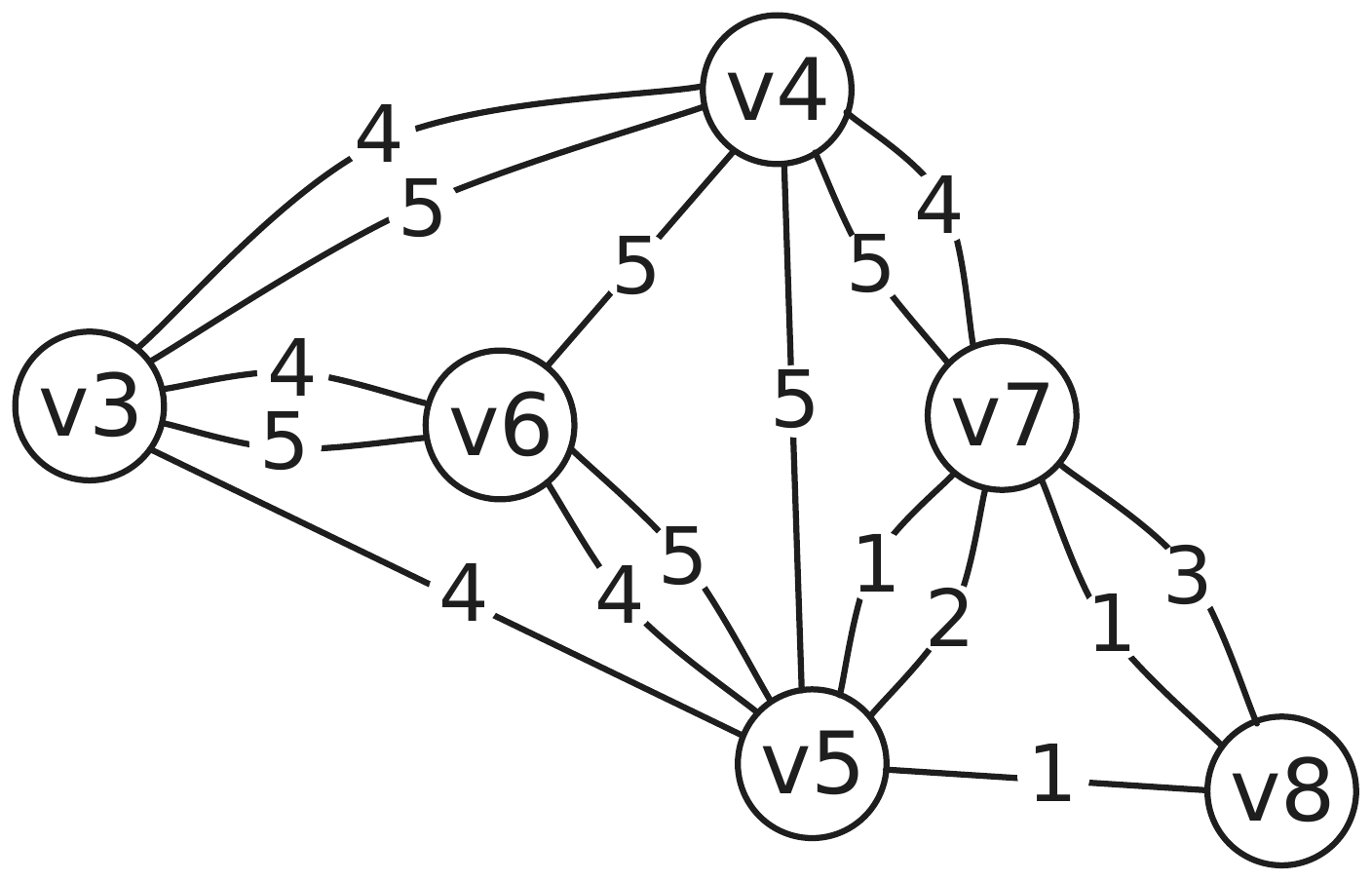}
        \caption{Temporal $2$-core in $[0,5]$}
        \label{fig:core-[0,5]}
    \end{subfigure}
    \hfill
    \begin{subfigure}
        [t]{0.24\textwidth}
        \centering
        \includegraphics[width=\textwidth]{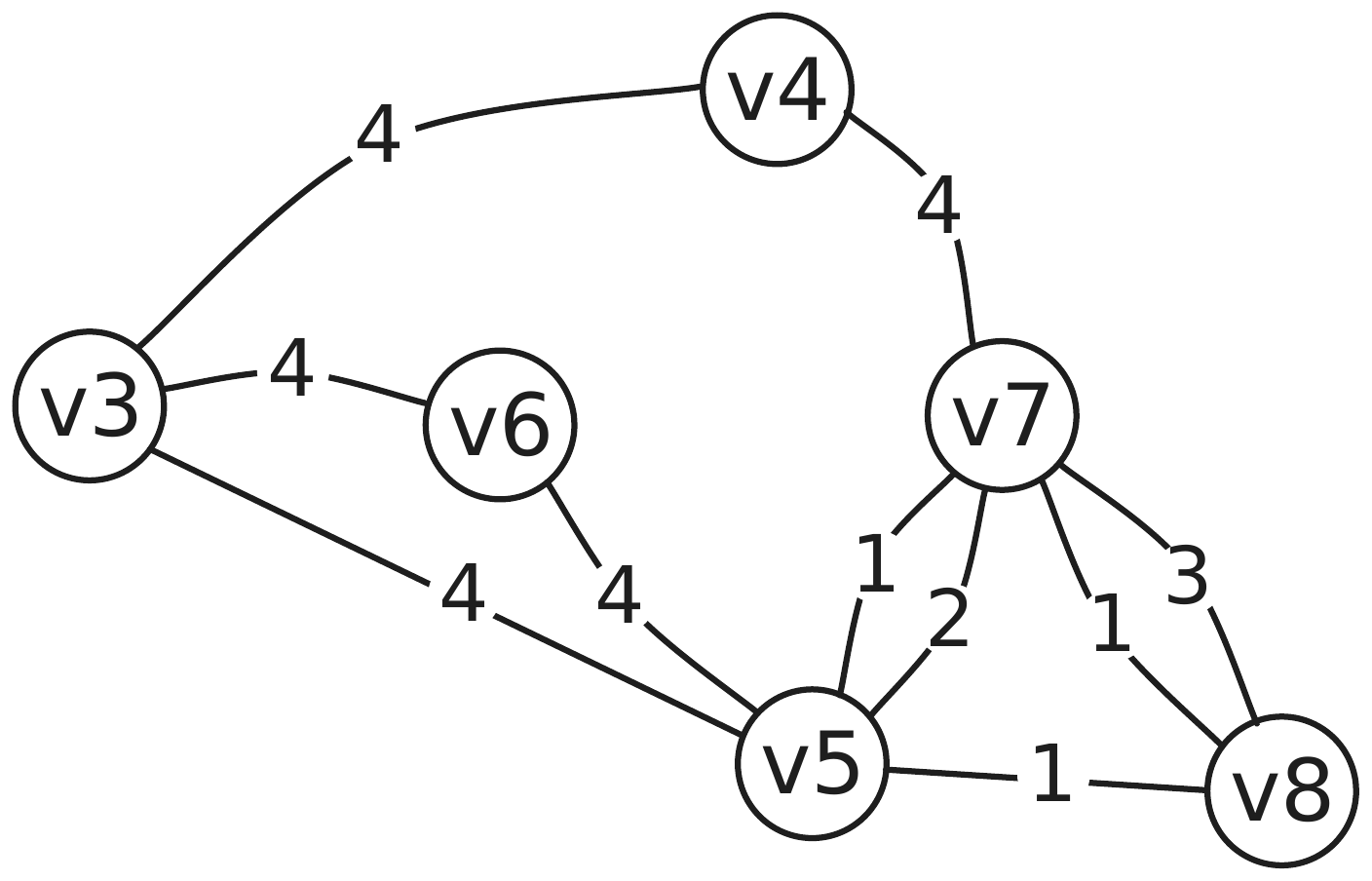}
        \caption{Temporal $2$-core in $[0,4]$}
        \label{fig:core-[0,4]}
    \end{subfigure}
    \par
    \medskip
    \begin{subfigure}
        [t]{0.24\textwidth}
        \centering
        \includegraphics[width=0.45\textwidth]{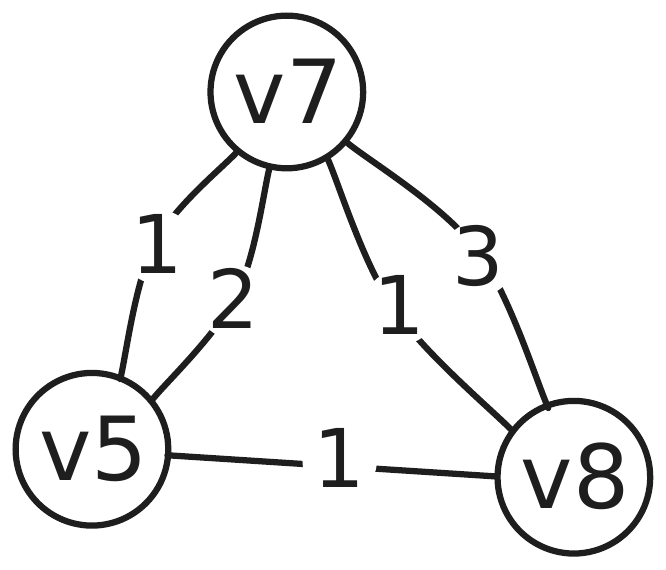}
        \caption{Temporal $2$-core in $[0,3]$}
        \label{fig:core-[0,3]}
    \end{subfigure}
    \hfill
    \begin{subfigure}
        [t]{0.24\textwidth}
        \centering
        \includegraphics[width=0.45\textwidth]{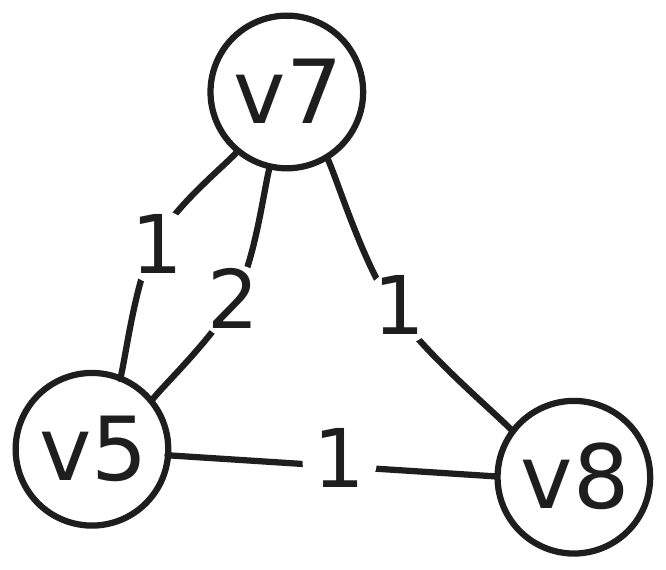}
        \caption{Temporal $2$-core in $[0,2]$}
        \label{fig:core-[0,2]}
    \end{subfigure}
    \hfill
    \begin{subfigure}
        [t]{0.24\textwidth}
        \centering
        \includegraphics[width=0.45\textwidth]{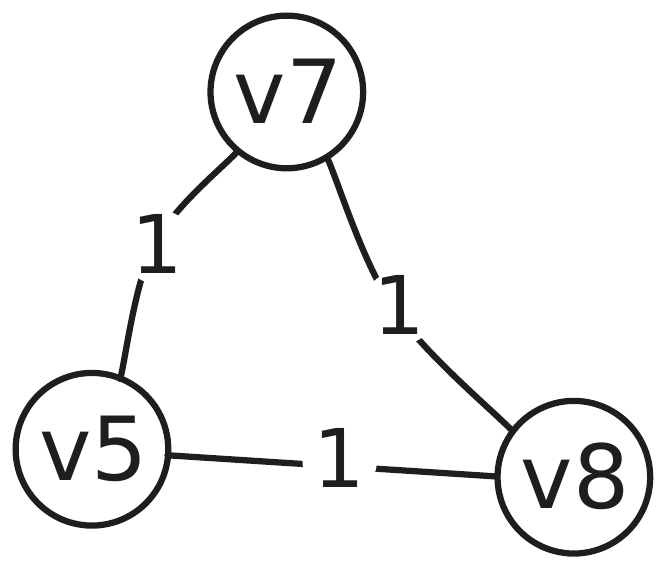}
        \caption{Temporal $2$-core in $[0,1]$}
        \label{fig:core-[0,1]}
    \end{subfigure}
    \hfill
    \begin{subfigure}
        [t]{0.24\textwidth}
        \centering
        \includegraphics[width=\textwidth]{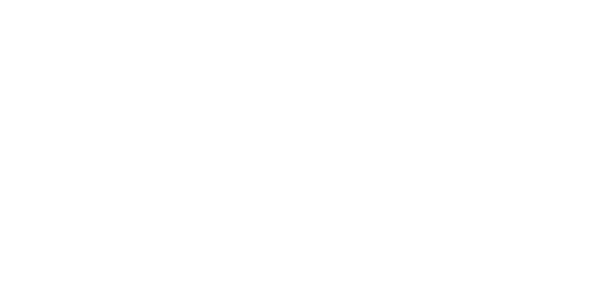}
        \caption{Temporal $2$-core in $[0,0]$}
        \label{fig:core-[0,0]}
    \end{subfigure}
    \caption{Temporal $2$-cores in intervals starting at time $0$.}
    \label{fig:main}
\end{figure*}

\begin{figure}[t]
    \centering
    \begin{subfigure}[c]{0.6\linewidth}
        \centering
        \includegraphics[width=\linewidth]{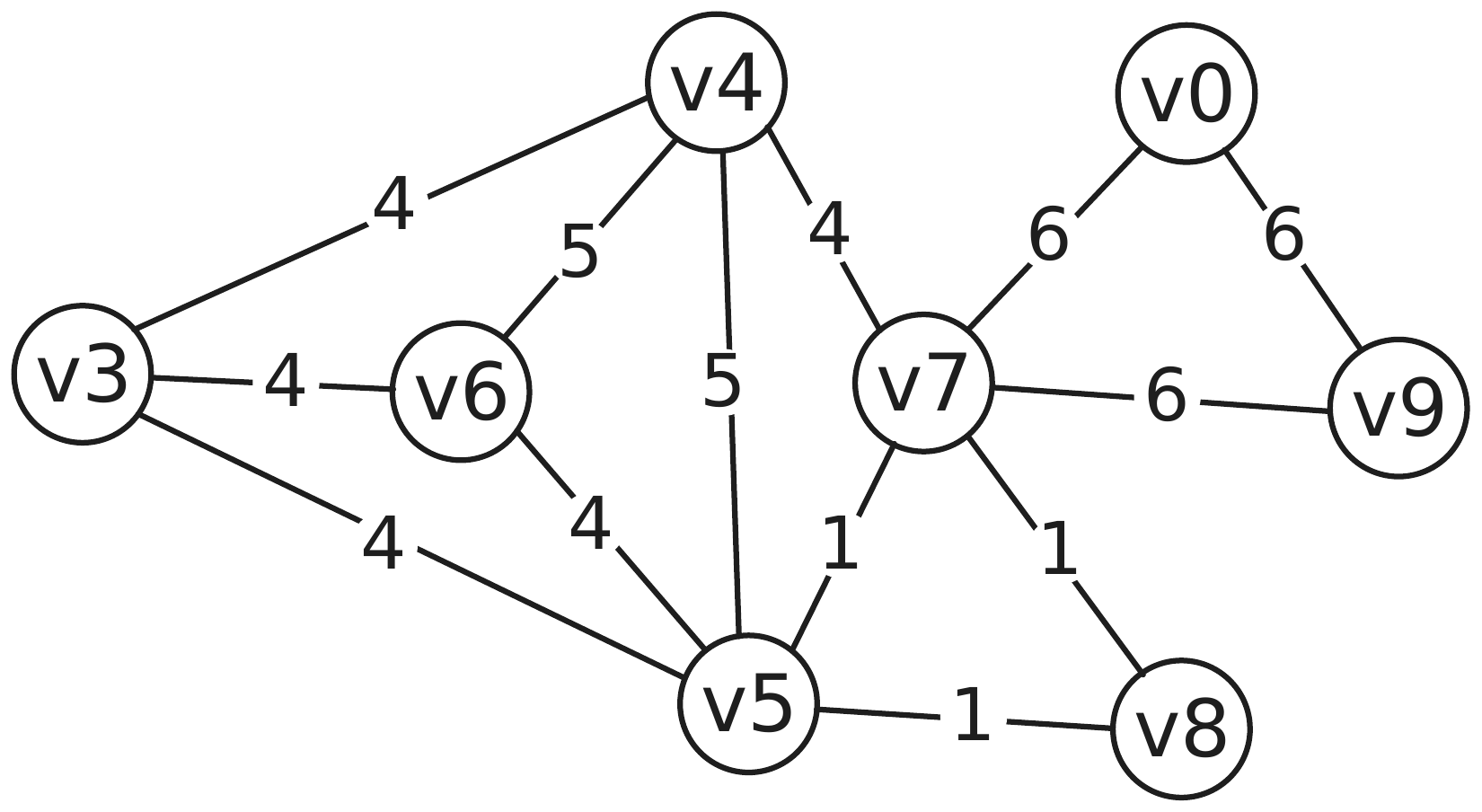}
    \end{subfigure}
    \begin{subfigure}[c]{0.3\linewidth}
        \centering
        \begin{tabular}{|c|c|}
          \hline
          v0 & 6 \\
          \hline
          v3 & 4 \\
          \hline
          v4 & 4 \\
          \hline
          v5 & 1 \\
          \hline
          v6 & 4 \\
          \hline
          v7 & 1 \\
          \hline
          v8 & 1 \\
          \hline
          v9 & 6 \\
          \hline
        \end{tabular}
    \end{subfigure}
    \caption{$2$-core times $\sigma_{0}(\cdot, \mathcal{G})$ in the initial temporal graph.}
    \label{fig:0-thres}
\end{figure}

\begin{figure}[t]
    \centering
    \begin{subfigure}[c]{0.6\linewidth}
        \centering
        \includegraphics[width=\linewidth]{fig/0-thres.pdf}
    \end{subfigure}
    \begin{subfigure}[c]{0.3\linewidth}
        \centering
        \begin{tabular}{|c|c|}
          \hline
          v0 & 6 \\
          \hline
          v3 & 4 \\
          \hline
          v4 & 4 \\
          \hline
          v5 & 1 \\
          \hline
          v6 & 4 \\
          \hline
          v7 & 1 \\
          \hline
          v8 & 1 \\
          \hline
          v9 & 6 \\
          \hline
        \end{tabular}
    \end{subfigure}
    \caption{$2$-core time $\sigma_{1}(\cdot, \mathcal G_1)$ after removing temporal edges with timestamp $0$.}
    \label{fig:1-thres}
\end{figure}

\begin{figure}[t]
    \centering
    \begin{subfigure}[c]{0.426\linewidth}
        \centering
        \includegraphics[width=\linewidth]{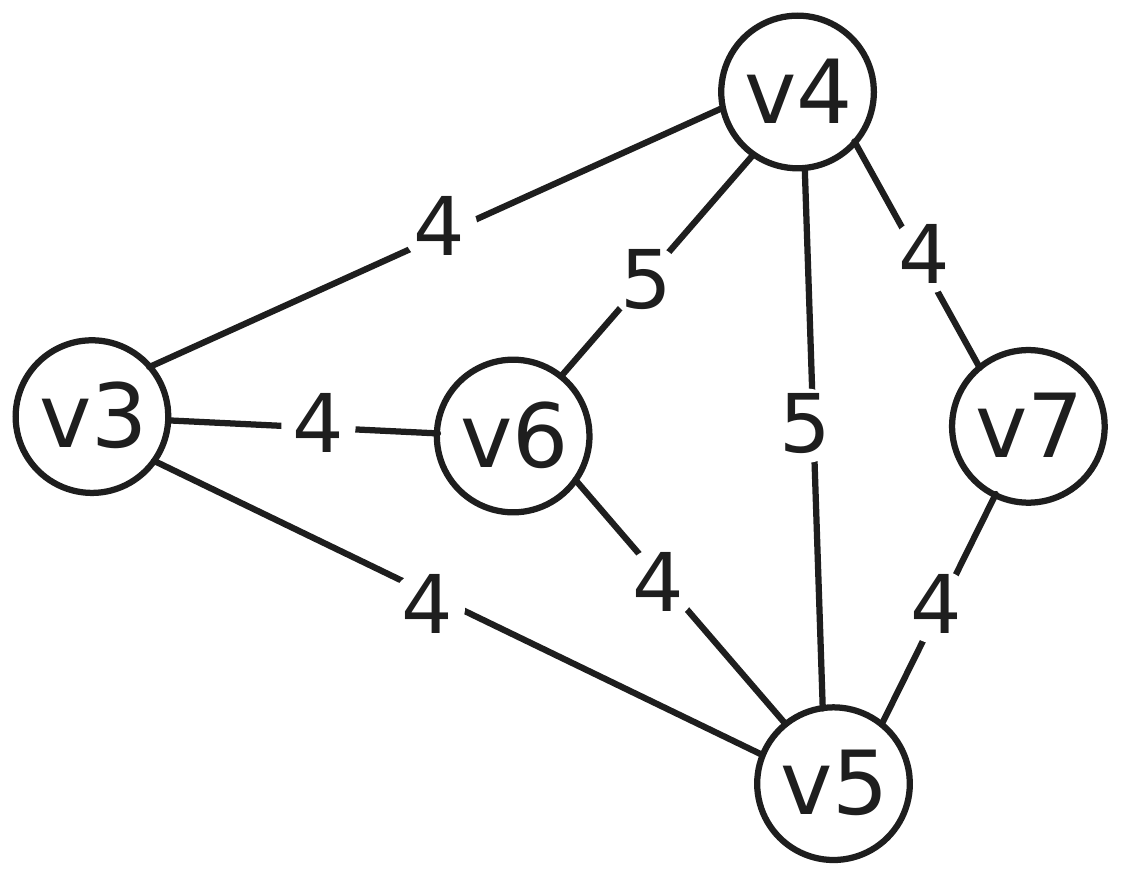}
    \end{subfigure}
    \begin{subfigure}[c]{0.3\linewidth}
        \centering
        \begin{tabular}{|c|c|}
          \hline
          v3 & 4 \\
          \hline
          v4 & 4 \\
          \hline
          v5 & 4 \\
          \hline
          v6 & 4 \\
          \hline
          v7 & 4 \\
          \hline
        \end{tabular}
    \end{subfigure}
    \caption{$2$-core time $\sigma_{1}(\cdot, \mathcal G_1)$ after removing temporal edges with timestamp $1$.}
    \label{fig:2-thres}
\end{figure}

We start with the key concept of \emph{$k$-core time}  defined below.

\begin{definition}[$k$-core time~\cite{yu2021querying}]
    Given a temporal graph $\mathcal{G}$ and a timestamp $x$ in $[0,t_{\max}]$, the $k$-core time of a vertex $v$ (resp. a detemporalized edge $(u,v)$) w.r.t. timestamp $x$, denoted by $\sigma_x(v,\mathcal{G})$ (resp. $\sigma_x(u,v,\mathcal{G})$), is defined as the minimum timestamp $t$ such that $v$ (resp. $(u,v)$) belongs to the $k$-core of the detemporalized graph $G_{[x,t]}$.\footnote{We remark that, if a vertex $v$ (resp. edge $(u,v)$) does not belong to any $k$-core of detemporalized graph $\mathcal{G}_{[x,t]}$ ($x\leq t\leq t_{\max}$), we define its $k$-core time $\sigma_x(\cdot,\mathcal{G})$ by $+\infty$ for better consistency.}
\end{definition}

We observe that, (\textbf{Motivation 1}) \emph{given the $k$-core times $\sigma_x(\cdot,\mathcal{G})$, all temporal $k$-cores with the same start time $x$, i.e., $\{\mathcal{T}_{[x,t_e]}(\mathcal{G})\mid x\leq t_e\leq t_{\max}\}$, can be efficiently retrieved from $\mathcal{G}$}. Specifically, a temporal $k$-core $\mathcal{T}_{[x,t_e]}(\mathcal{G})$ consists of all those vertices with $k$-core time $\sigma_x$ at most $t_e$ (i.e., $\{v\in V\mid \sigma_x(v,\mathcal{G})\leq t_e\}$) and all temporal edges whose corresponding detemporalized edge has $k$-core time at most $t_e$ and whose timestamp is at most $t_e$
(i.e., $\{(u,v,t)\in\mathcal{E}\mid \sigma_x(u,v,\mathcal{G})\leq t_e \text{ and } t\leq t_e\}$). Formally, we summarize it in the following lemma.

\begin{lemma}
\label{lemma:core-time-query}
    Let $\mathcal{G}$ be a temporal graph and $0\leq x\leq t_e\leq t_{\max}$, we have
    \begin{eqnarray}
        \mathcal{T}_{[x,t_e]}(\mathcal{G}) = (\{v\in V\mid \sigma_x(v,\mathcal{G})\leq t_e\},\nonumber\\ \{(u,v,t)\in\mathcal{E}\mid \sigma_x(u,v,\mathcal{G})\leq t_e \text{ and } t\leq t_e\}).
    \end{eqnarray}
\end{lemma}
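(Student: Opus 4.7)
The plan is to establish the set equality by double inclusion, treating the vertex component and the edge component separately. The key tool throughout is a \emph{monotonicity property of $k$-cores}: for any two graphs $H, H'$ on the same vertex set with $H \subseteq H'$, the $k$-core of $H$ is contained in the $k$-core of $H'$. This follows in a single step: every vertex of the $k$-core of $H$ keeps its at least $k$ neighbours (within that $k$-core) under the larger edge set of $H'$, so this subgraph is already a min-degree-$k$ subgraph of $H'$, and must therefore lie inside the maximal such subgraph, which by definition is the $k$-core of $H'$. All of my arguments below will be instances of this principle applied to the expanding detemporalized graphs $G_{[x,t]}$ as $t$ grows from $x$ up to $t_{\max}$.

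For the vertex component, first suppose $v$ is a vertex of $\mathcal{T}_{[x,t_e]}(\mathcal{G})$. By definition $v$ belongs to the $k$-core of $G_{[x,t_e]}$, so $t_e$ itself witnesses the minimum in the definition of $\sigma_x(v, \mathcal{G})$, giving $\sigma_x(v, \mathcal{G}) \leq t_e$. Conversely, if $\sigma_x(v, \mathcal{G}) = t' \leq t_e$, then $v$ lies in the $k$-core of $G_{[x,t']}$; since $G_{[x,t']} \subseteq G_{[x,t_e]}$ (edges only accumulate as the right endpoint grows), monotonicity places $v$ in the $k$-core of $G_{[x,t_e]}$, i.e., in the vertex set of $\mathcal{T}_{[x,t_e]}(\mathcal{G})$.

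For the edge component, I will exploit the structural observation that $\mathcal{T}_{[x,t_e]}(\mathcal{G})$ is, by its defining maximal-induced-subgraph property, the subgraph of $\mathcal{G}_{[x,t_e]}$ induced on its vertex set; hence a temporal edge $(u,v,t)$ lies in $\mathcal{T}_{[x,t_e]}(\mathcal{G})$ precisely when $u, v$ are both $k$-core vertices and $t \in [x, t_e]$. The condition that $(u,v)$ joins two $k$-core vertices of $G_{[x,t_e]}$ is exactly the condition that the detemporalized edge lies in that $k$-core, which by the same definition-plus-monotonicity argument as above is equivalent to $\sigma_x(u,v,\mathcal{G}) \leq t_e$. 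Combined with the timestamp condition, this recovers the claimed edge set.

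The step I expect to require the most care is the monotonicity property itself: although intuitive, $k$-cores are defined via maximality, and one must verify that the $k$-core of the smaller graph still qualifies as a min-degree-$k$ subgraph inside the larger graph rather than just as an abstract graph. Once this is in hand, both inclusions reduce to mechanically unwinding the definition of $\sigma_x$ together with the definition of $\mathcal{T}_{[x,t_e]}(\mathcal{G})$, with no further combinatorial content.
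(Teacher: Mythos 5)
Your proof is correct and follows the route the paper gestures at: the paper dismisses this lemma with ``easily verified based on the definition of $k$-core time,'' and your double-inclusion argument via the monotonicity of $k$-cores under edge addition (which is exactly the nested property of Lemma~\ref{lemma:nested-property}) supplies the missing details for both the vertex and edge components. One small observation in your favor: your edge characterization correctly requires $t \in [x,t_e]$, whereas the lemma's displayed set only demands $t \le t_e$; the discrepancy is harmless because wherever the lemma is invoked, $\mathcal{G}$ has already been restricted so that no temporal edge has timestamp below $x$.
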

\begin{proof}
    This can be easily verified based on the definition of $k$-core time. 
\end{proof}

Based on Lemma~\ref{lemma:core-time-query}, we can efficiently answer the temporal $k$-core query by utilizing a series of $k$-core times, i.e., $\sigma_{T_s},\sigma_{T_s+1},...,\sigma_{T_e}$. We further observe that (\textbf{Motivation 2}) \emph{$\sigma_{T_s}$ can be computed efficiently in $O(|E|)$}, where $E$ is the edge set of the detemporalized graph of $\mathcal{G}$ (details will be discussed in Section~\ref{subsec:coreT-init}), and (\textbf{Motivation 3}) \emph{$\sigma_{x+1}$ can be obtained efficiently based on $\sigma_x$ where $T_s\leq x\leq T_e-1$} (details will be discussed in Section~\ref{subsec:coreT-update}). 

Motivated by all the above, we propose a new framework called \texttt{CoreT}, which is summarized in Algorithm~\ref{alg:enumeration}. Specifically, it includes two phrases, namely \emph{Phase I: initialization} and \emph{Phase II: Iterative update}, as below.

\smallskip
\noindent\textbf{Phase I: Initialization}. Our method \texttt{CoreT} begins by computing the $k$-core times $\sigma_{T_s}(\cdot)$. Specifically, it first refines the input graph $\mathcal{G}$ to $\mathcal{G}_0=\mathcal{G}_{[T_s,T_e]}$ (Line 1), since all temporal $k$-cores within $[T_s,T_e]$ are subgraphs of $\mathcal{G}_{[T_s,T_e]}$. Then, for each vertex and edge in the detemporalized graph of $\mathcal{G}_0$, it computes the $k$-core time $\sigma_{T_s}(\cdot,\mathcal{G}_0)$ via the procedure of \texttt{CoreT\_Init} (Line 2). The technical details will be discussed in Section~\ref{subsec:coreT-init}. Finally, according to $\sigma_{T_s}(\cdot,\mathcal{G}_0)$ and Lemma~\ref{lemma:core-time-query}, it outputs all those temporal $k$-cores with start time $T_s$, i.e., $\{\mathcal{T}_{[T_s,t_e]}(\mathcal{G})\mid T_s\leq t_e\leq T_e\}$ via the procedure \texttt{CoreT\_List} (Line 3, Lines 7-11).

\smallskip
\noindent\textbf{Phase II: Iterative update}. \texttt{CoreT} iteratively performs $\Delta-1$ iterations, where the $t$-th $(1\leq t\leq \Delta-1)$ round aims to output all those temporal $k$-cores with start timestamp $T_s+t$, i.e., $\{\mathcal{T}_{[T_s+t,t_e]}\mid T_s+t\leq t\leq T_e\}$ (Line 6).
\underline{First}, we note that the input temporal graph $\mathcal{G}_0$ (which contains all temporal $k$-cores to be output) can be progressively refined during the iterations. 
Specifically, at the $t$-th iteration, we refine $\mathcal{G}_{t-1}$ (obtained either from a former iteration or initially as $\mathcal{G}_0$) to a smaller one $\mathcal{G}_{t}$ by dropping some edges that cannot appear in any temporal $k$-core to be yielded in this round (Line 5).
The rationale is as follows. Since $\mathcal{G}_{t-1}$ (initially $\mathcal{G}_0$) contains all temporal $k$-cores with start timestamp $T_s+t-1$, it also contains all temporal $k$-cores with start timestamp $T_s+t$, according to Lemma~\ref{lemma:nested-property}. Thus, the refined input graph $\mathcal{G}_{t}$ contains all temporal $k$-cores with start timestamp $T_s+t$.  
\underline{Second}, we obtain $\sigma_{T_s+t}(\cdot,\mathcal{G}_t)$ based on $\sigma_{T_s+t-1}(\cdot,\mathcal{G}_{t-1})$ (Line 5). Specifically, instead of computing $\sigma_{T_s+t}(\cdot,\mathcal{G}_t)$ from scratch (i.e., via \texttt{CoreT\_Init}$(T_s+t,\mathcal{G}_t)$), we propose to update $\sigma_{T_s+t-1}(\cdot)$ to $\sigma_{T_s+t}(\cdot)$ by considering the removal of those temporal edges with timestamp $T_s+t-1$. 
We will discuss (1) how to refine the input graph $\mathcal{G}_{t-1}$ efficiently and (2) how to obtain $\sigma_{T_s+t}(\cdot)$ based on $\sigma_{T_s+t-1}(\cdot)$ in Section~\ref{subsec:coreT-update}.

\smallskip
\noindent\textbf{Illustrative example}. We illustrate the execution of \texttt{CoreT} using the temporal graph $\mathcal{G}$ in Fig.~\ref{fig:graph}. We consider a full-span query with start timestamp $T_s = 0$ and end timestamp $T_e = 7$, and set $k = 2$. Accordingly, we let $\mathcal{G}_0 = \mathcal{G}$.
In \textbf{Phase I}, we compute the initial $2$-core times $\sigma_0(\cdot, \mathcal{G}_0)$ for all vertices and edges. Figure~\ref{fig:graph} presents the temporal graph, and Figure~\ref{fig:0-thres} illustrates the computed core times, where edge labels indicate the $2$-core times of edges and vertex core times are listed in the accompanying table. For example, since $\sigma_0(v) = 4$, vertex $v$ appears in the $2$-core of $\mathcal{G}_{[0, t]}$ only when $t \geq 4$.
Using these core times, we enumerate all temporal $2$-cores with start time $0$, such as $\mathcal{T}_{[0,4]}(\mathcal{G})$ shown in Figure~\ref{fig:core-[0,4]}, which includes all vertices and edges with $2$-core time at most $4$.
In \textbf{Phase II}, we iteratively compute the core times. At each iteration $t$, we first refine $\mathcal{G}_t$ by removing all temporal edges with timestamp $t-1$ from $\mathcal{G}_{t-1}$, then update the core times in the detemporalized graph accordingly, and finally enumerate all temporal $k$-cores with start time $t$ by using the updated core times.
Figures~\ref{fig:1-thres} and~\ref{fig:2-thres} depict the core times obtained after the first and second iterations, respectively. The updated core times at each step enable efficient enumeration of temporal $2$-cores beginning at the corresponding start timestamps $t = 1$ and $t = 2$.

\begin{algorithm}[t]
    \caption{Our core-time-based method: \texttt{CoreT}}
    \label{alg:enumeration} 
    \KwIn{Temporal graph $\mathcal{G}=(V, \mathcal{E})$, a time interval $[T_s,T_e]$ and a positive integer $k$}
    \KwOut{All distinct temporal $k$-cores $\mathcal{T}_{[t_s,t_e]}(\mathcal{G})$ such that $[t_s,t_e]\subseteq [T_s,T_e]$}
    \tcc{\textbf{Phase I}: Initialization}
    $\mathcal{G}_0\leftarrow \mathcal{G}_{[T_s,T_e]}$\tcp*[r]{Refine $\mathcal{G}$}
    $\sigma_{T_s}(\cdot,\mathcal{G}_0)\leftarrow$\texttt{CoreT\_Init}$(T_s,\mathcal{G}_0)$\tcp*[r]{Obtain $\sigma_{T_s}$}
    
    \texttt{CoreT\_List}$(\sigma_{T_s}(\cdot,\mathcal{G}_0),\mathcal{G}_0)$\;
    
    \tcc{\textbf{Phase II}: Iterative update}
    \For{$t \gets 1, 2, \dots, \Delta -1 $}{ 
        $\sigma_{T_s+t}                            (\cdot,\mathcal{G}_t),\mathcal{G}_t\leftarrow$\texttt{CoreT\_Update}$(\sigma_{T_s+t-1}(\cdot,\mathcal{G}_{t-1}),$\ $\mathcal{G}_{t-1})$\;
        \texttt{CoreT\_List}$(\sigma_{T_s+t},\mathcal{G}_t)$;
    }
    
    \SetKwBlock{Enum}{Procedure \texttt{CoreT\_List}$(\sigma_x(\cdot,\mathcal{G}_i),\mathcal{G}_i)$}{}
    \Enum{
        \For{$t_e\leftarrow x,x+1,\cdots, T_e$}{
            $V^*\gets \{u\in V_i\mid \sigma_x(u,\mathcal{G}_i) \leq t_e\}$\;
            $\mathcal{E}^*\gets \{(u,v,t)\in\mathcal{E}_i\mid \sigma_{x}(u,v,\mathcal{G}_i)\leq t_e \ \text{and} \  t \le t_e\}$\;
            \textbf{yield} $\mathcal{T}_{[x,t_e]}(\mathcal{G})\gets (V^*,\mathcal{E}^*)$\;
        }
    }
\end{algorithm}

\begin{figure}[t]
    \centering
    \includegraphics[width=\linewidth]{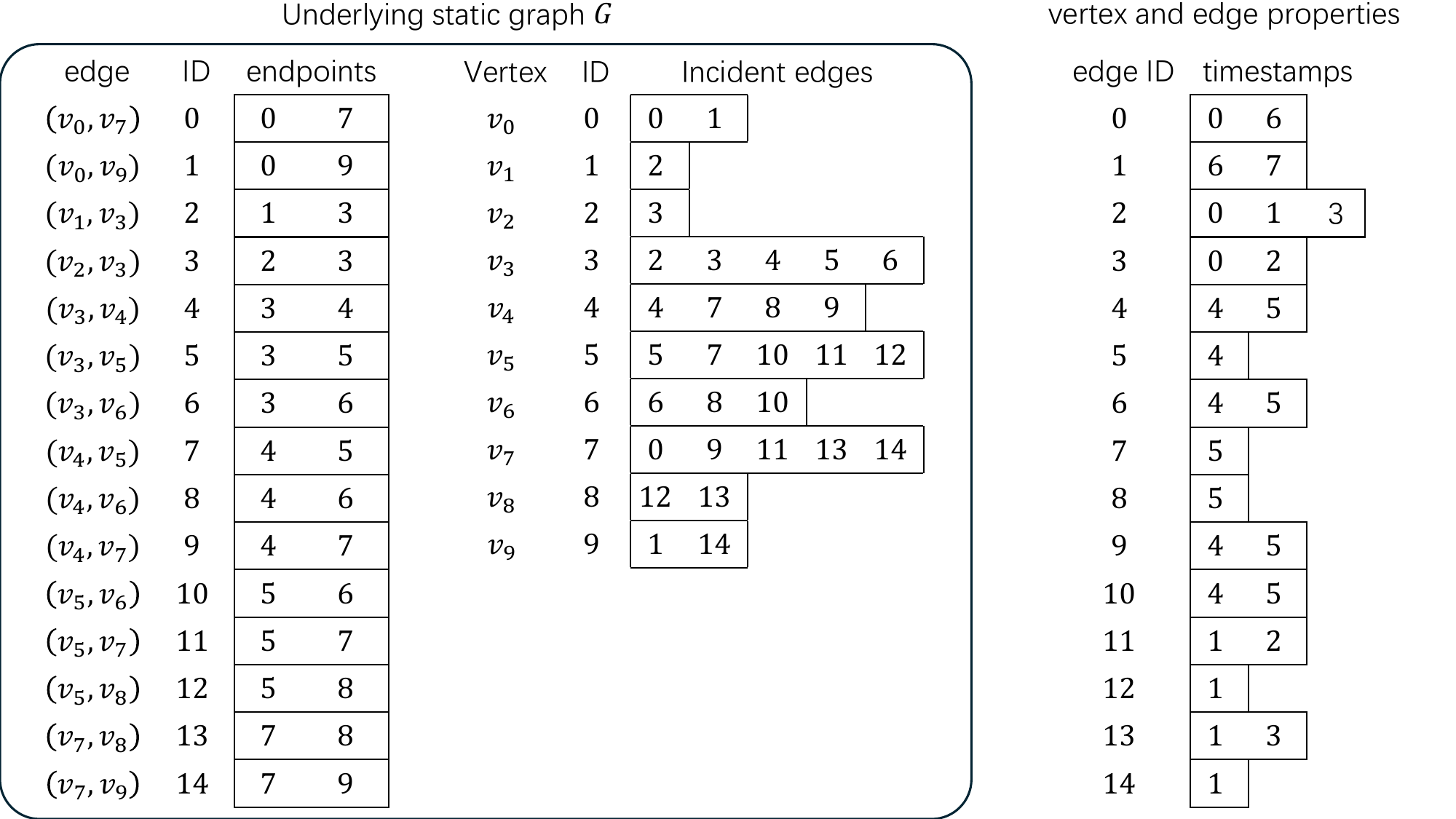}
    \caption{Illustration of temporal graph representation.}
    \label{fig:graph_represent}
\end{figure}

\smallskip
\noindent\textbf{Details of the implementation.} We remark that the choice of data structure for storing a temporal graph in memory significantly affects the performance of graph algorithms, both in theory and in practice. Instead of adopting TEL for storing temporal graphs (as in the existing work~\cite{yang2023scalable}), we use a new data structure that is both simple and better suited to our proposed method. Specifically, for a temporal graph $\mathcal{G}$, we store its detemporalized graph $G$ (by using classic adjacency lists) and a list of detemporalized edges where each edge is assigned with a list of values \emph{sorted in ascending} (which could correspond to endpoints, timestamps and core times). To illustrate, see an example in Figure~\ref{fig:graph_represent}. Clearly, the space complexity is $O(|V|+|\mathcal{E}|)$.

\smallskip
\noindent\textbf{Time complexity analysis}. 
We analyze the complexity of our proposed \texttt{CoreT}. In particular, \texttt{CoreT} consists of two phases: Phase I includes one \texttt{CoreT\_Init} and one \texttt{CoreT\_List} call, while Phase II consists of \(\Delta = T_e - T_s + 1\) iterations, each invoking both \texttt{CoreT\_Update} and \texttt{CoreT\_List}. Each of the procedures \texttt{CoreT\_Init}, \texttt{CoreT\_Update}, and \texttt{CoreT\_List} runs in \(O(|\mathcal{E}_{[T_s, T_e]}|)\). The complexity of \texttt{CoreT\_List} is analyzed below, while the detailed descriptions and analyses of \texttt{CoreT\_Init} and \texttt{CoreT\_Update} are provided in Sections~\ref{subsec:coreT-init} and~\ref{subsec:coreT-update}, respectively. Thus, the overall time complexity of \texttt{CoreT} is $O(\Delta\times |\mathcal E_{[T_e-T_s]})$.

We analyze the complexity of \texttt{CoreT\_List}. We note that for each call, the temporal $k$-cores enumerated by \texttt{CoreT\_List} over sub-intervals $[x, t_e]$ form a monotonically increasing sequence with respect to edge and vertex inclusion, i.e.,
\[
\mathcal{T}_{[x,x]}(\mathcal{G}) \subseteq \mathcal{T}_{[x,x+1]}(\mathcal{G}) \subseteq \cdots \subseteq \mathcal{T}_{[x,T_e]}(\mathcal{G}).
\]  
Accordingly, \texttt{CoreT\_List} maintains and updates the sets \(V^*\) and \(\mathcal{E}^*\) by incrementally adding newly included edges and vertices at each step. The total cost of one \texttt{CoreT\_List} call at iteration \(i\) is linear in the size of the temporal graph \(\mathcal{G}_i\), bounded by \(O(|\mathcal{E}_{[T_s, T_e]}|)\).

\smallskip
\noindent \textbf{Remark}. To ensure that only unique temporal $k$‑cores are returned, both the state-of-the-art algorithm \texttt{OTCD} and our proposed algorithm \texttt{CoreT} exploit the fact that each distinct tightest time interval (TTI) uniquely identifies a temporal $k$‑core. In particular, the TTI refers to the minimal sub-interval $[t_s', t_e']$ within the original query interval $[t_s, t_e]$ that contains all temporal edges of a given temporal $k$‑core $\mathcal{T}_{[t_s, t_e]}(\mathcal{G})$. That is, $t_s'$ and $t_e'$ correspond to the earliest and latest timestamps among the edges in the core, respectively. This compact interval captures the precise temporal span during which the $k$‑core structure is maintained. Thus, once a temporal $k$‑core is generated, it suffices to check whether its TTI has already been encountered to ensure the uniqueness of the result. Checking the uniqueness of TTIs requires at most  $O(|T_e - T_s|^2)$ time. Under the assumption that each timestamp is associated with at least one edge, this cost is strictly smaller than $|E_{[T_s, T_e]}|$. It does not affect the overall complexity of either \texttt{OTCD} or \texttt{CoreT}. Therefore, we omit the implementation details for duplicate core removal in this paper.

\subsection{\texttt{CoreT\_Init}: Core Time Computation}
\label{subsec:coreT-init}

Consider the process of computing $k$-core times $\sigma_{T_s}(\cdot,\mathcal{G}_0)$ for vertices and edges in the detemporalized graph of $\mathcal{G}_0$ (denoted by $G_0$). We start with the following definitions.

\begin{definition}[$k$-degree time]
    Given a temporal graph $\mathcal{G}$ and a timestamp $x$ in $[0,t_{\max}]$, the $k$-degree time of a vertex $v$ w.r.t timestamp $x$, denoted by $d_x(v,\mathcal{G})$, is the minimum timestamp $t$ such that $v$ has degree at least $k$ in the detemporalized graph $G_{[x,t]}$.\footnote{We remark that if a vertex $v$ has the degree less than $k$ in $G_{x,t_{\max}}$, we denote its $k$-degree time $d_x(v,\mathcal{G})$ by $+\infty$.} 
\end{definition}

\begin{definition}[support time]
    Given a temporal graph $\mathcal{G}$ and a detemporalized edge $(u,v)$ in $G$, the support time of $(u,v)$, denoted by $Sup(u,v,\mathcal{G})$, is the minimum timestamp $t$ such that $(u,v,t)$ exists, i.e.,
    \begin{equation}
        Sup(u,v,\mathcal{G})=\min\{t \mid (u,v,t)\in\mathcal{E}\}.
    \end{equation}
\end{definition}

In the following, we first discuss how to compute the support times $Sup(\cdot,\mathcal{G}_0)$ and degree times $d(\cdot,\mathcal{G})$ for $\mathcal{G}_0$, and then leverage the degree times $d(\cdot,\mathcal{G})$ to compute the core times $\sigma_{T_s}(\cdot,\mathcal{G}_0)$.

\smallskip
\noindent\textbf{Computation of support time and degree time}. The support times $Sup(\cdot,\mathcal{G}_0)$ of all detemporalized edges in $G_0$ can be obtained directly in $O(|E|)$ (i.e., by taking the first timestamp from the assigned list for each edge) based on our graph representation. Since the incident edges of each vertex $v$ are pre-sorted by support time, the $k$-degree time $d_{T_s}(v, \mathcal{G}_0)$ can be computed in $O(1)$ time by directly accessing the $k$-th smallest value.
Therefore, the overall computation requires $O(|E_0|)$ time, which is \emph{linear} in the input size.

\begin{algorithm}[t]
    \caption{\texttt{Core\_Init}}
    \label{alg:decomp} \KwIn{Temporal graph $\mathcal{G}_0$}
    \KwOut{$k$-core time $\sigma_{T_s}(\cdot,\mathcal{G}_0)$ for each vertex and edge in $G_0$}
    Compute support times $Sup(u,v,\mathcal{G}_0)$ of all edges\;
    Compute $k$-degree times $d_{T_s}(\cdot,\mathcal{G}_0)$ of all vertices\; 
    $t\gets +\infty$, $\mathcal{G}'\gets \mathcal{G}_0$\; 
    \lForEach{vertex $v$ in $V_0$}{$d_{T_s}(v,\mathcal{G}')\gets d_{T_s}(v,\mathcal{G}_0)$} 
    \While{$V'$ is not empty}{
        $v^*\gets \arg\max_{v\in V'} d_{T_s}(v,\mathcal{G}')$\;
        $t\gets \min\{t,d_{T_s}(v^*,\mathcal{G}')\}$\;
        $\sigma_{T_s}(v^*,\mathcal{G}_0)\gets t$\;
        $V'\gets V'\backslash \{v^*\}$\;
        \tcc{Update $k$-degree time }
        \ForEach{neighbor $v$ of $v^*$}{
                $d_{T_s}(\cdot,\mathcal{G}')\gets$ the $k$-th smallest one among $\{Sup(v^*,v,\mathcal{G}_0)\mid v\in V' \& (v,v^*)\in E_0\}$;
        }
    }
    \ForEach{edge $(u,v)$ in $E_0$}{
        $\sigma_{T_s}(u,v,\mathcal{G}_0)\gets \max\{\sigma_{T_s}(u,\mathcal{G}_0),\sigma_{T_s}(v,\mathcal{G}_0),Sup(u,v,\mathcal{G}_0)\}$;
    }
    \Return $\sigma_{T_s}(\cdot,\mathcal{G}_0)$;
\end{algorithm}

\smallskip
\noindent\textbf{Computation of core time}. We observe that the $k$-core times $\sigma_{T_s}(\cdot,\mathcal{G}_0)$ of vertices can be obtained via an iterative peeling procedure, as summarized in Algorithm~\ref{alg:decomp}. Specifically, we iteratively remove from the current temporal graph $\mathcal{G}'=(V',\mathcal{E}')$ a vertex $v$ with the largest $k$-degree time (Lines 3-12). The $k$-core time $\sigma_{T_s}(v,\mathcal{G}_0)$ is set to the minimum value between the $k$-degree time of the previously removed vertex (initially $+\infty$) and that of the currently removed one (Line 7). When a vertex is removed from the current graph $\mathcal{G}'$, we need to update the $k$-degree times of its neighbors (Lines 10-12). Clearly, the $k$-degree times $d{T_s}(\cdot, \mathcal{G}')$ of non-neighbors of $v$ remain unchanged after $v$ is removed. We remark that if $v$ has less than $k$ neighbors in the remaining graph $\mathcal{G}'$, its $k$-degree time is set by $+\infty$.
Then, the $k$-core time of each edge $(u,v)$, i.e., $\sigma_{T_s}(u,v,\mathcal{G}_0)$ is set to the minimum value of the $k$-core times of $u$ and $v$ (Lines 12-13).
Below, we formally prove the correctness of Algorithm~\ref{alg:decomp}.

\begin{lemma}
    Algorithm~\ref{alg:decomp} finds the $k$-core times $\sigma_{T_s}(\cdot,\mathcal{G}_0)$ of all vertices and edges in the detemporalized graph $G_0$.
\end{lemma}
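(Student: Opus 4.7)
The plan is to proceed by induction on the peeling iterations of Algorithm~\ref{alg:decomp}, after first observing that the initialization steps are correct by construction: since each detemporalized edge carries its sorted list of timestamps, the support time $Sup(u,v,\mathcal{G}_0)$ is read off as the first entry, and since each vertex's incident-edge list is kept sorted by support time, the initial $k$-degree time $d_{T_s}(v,\mathcal{G}_0)$ is directly the $k$-th such value. These preparatory steps run in $O(|E_0|)$ time.

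For the main loop I would maintain the following invariant at the start of iteration $i$, with $t^*_0:=+\infty$: (i) for every previously peeled $v_j$ ($j<i$), the assigned value $t^*_j$ equals $\sigma_{T_s}(v_j,\mathcal{G}_0)$; (ii) $t^*_1\geq t^*_2\geq\cdots\geq t^*_{i-1}$; (iii) every vertex $v$ with $\sigma_{T_s}(v,\mathcal{G}_0)>t^*_{i-1}$ has already been peeled, so that for every $t<t^*_{i-1}$ the $k$-core of $G_{[T_s,t]}$ is contained in $V'$; (iv) the stored $d_{T_s}(\cdot,\mathcal{G}')$ is the correct $k$-degree time in the subgraph of $G_0$ induced on the current $V'$, which follows directly from the neighbour-update performed inside the while loop.

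Given the invariant, let $v_i=\arg\max_{v\in V'} d_{T_s}(v,\mathcal{G}')$, $d_i=d_{T_s}(v_i,\mathcal{G}')$, and $t^*_i:=\min(t^*_{i-1},d_i)$. I would prove $\sigma_{T_s}(v_i,\mathcal{G}_0)=t^*_i$ by matching bounds. For the lower bound, any $t<t^*_i$ satisfies both $t<t^*_{i-1}$ and $t<d_i$; by~(iii) the $k$-core at time $t$ is contained in $V'$, yet inside $V'$ the vertex $v_i$ has fewer than $k$ neighbours with support time $\leq t$, so it cannot belong to that $k$-core. For the upper bound, the choice of $v_i$ as the degree-time maximum means every $v\in V'$ has $\geq k$ neighbours inside $V'$ in $G_{[T_s,d_i]}$, so $V'$ itself is a subgraph of minimum degree $\geq k$ in $G_{[T_s,d_i]}$ and is therefore contained in the $k$-core of $G_{[T_s,d_i]}$; hence $\sigma_{T_s}(v_i,\mathcal{G}_0)\leq d_i$. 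When $d_i\leq t^*_{i-1}$ this already gives $\sigma_{T_s}(v_i,\mathcal{G}_0)\leq t^*_i$; otherwise I use the induction to identify the $k$-core of $G_{[T_s,t^*_{i-1}]}$ with $V'\cup\{v_j:j<i,\,t^*_j=t^*_{i-1}\}$, which places $v_i$ inside that $k$-core and yields $\sigma_{T_s}(v_i,\mathcal{G}_0)\leq t^*_{i-1}=t^*_i$. Re-establishing the four invariants for iteration $i+1$ is then routine.

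The main obstacle will be the sub-case $d_i>t^*_{i-1}$ of the upper bound, because it forces a rendezvous between the peeling order and the structure of the $k$-core at the previous threshold: one must argue that vertices already peeled with core time exactly $t^*_{i-1}$ supply the missing neighbours of $v_i$ inside the $k$-core of $G_{[T_s,t^*_{i-1}]}$. Handling this cleanly requires slightly strengthening invariant~(iii) to an explicit characterisation of the $k$-core at $t^*_{i-1}$ and careful bookkeeping of ties between vertices with equal degree times. After vertex core times are proven correct, the final edge loop is immediate: a detemporalized edge $(u,v)$ belongs to the $k$-core of $G_{[T_s,t]}$ iff both endpoints do and some copy $(u,v,t')$ with $t'\leq t$ exists, i.e.\ iff $t\geq\max\{\sigma_{T_s}(u,\mathcal{G}_0),\sigma_{T_s}(v,\mathcal{G}_0),Sup(u,v,\mathcal{G}_0)\}$, matching the assignment in the concluding edge loop.
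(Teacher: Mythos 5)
Your proposal is correct and follows essentially the same route as the paper: both analyses peel the vertex with maximum $k$-degree time, track the running minimum of the peeled degree times, and establish $\sigma_{T_s}(v_i,\mathcal{G}_0)=t^*_i$ by matching upper and lower bounds; your lower-bound argument (the $k$-core of $G_{[T_s,t]}$ for $t<t^*_i$ is still wholly contained in $V'$, where $v_i$ lacks $k$ neighbours with support time $\le t$) is in fact stated more carefully than the paper's, which relies on an inequality between $d_{T_s}(w,\mathcal{G}_0)$ and $d_{T_s}(w,\mathcal{G}')$ written in the wrong direction. The one place you flag as an obstacle, the sub-case $d_i>t^*_{i-1}$, is handled in the paper uniformly by a slightly different device: take the \emph{first} peeled vertex $v'$ whose degree time equals the final assigned value $t$; at that iteration every vertex still in $V'$ (including $w$) has $k$-degree time at most $t$, so $V'$ induces a subgraph of minimum degree $\ge k$ in $G_{[T_s,t]}$ and hence lies inside its $k$-core. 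This is exactly the containment direction of the invariant you propose to add, namely $V'\subseteq\mathcal{T}_{[T_s,t^*_{i-1}]}$; note that only this one-sided containment is needed (and is maintainable, since $V'$ only shrinks after the running minimum drops to $t^*_{i-1}$), whereas the full identification of the $k$-core with $V'\cup\{v_j: t^*_j=t^*_{i-1}\}$ presupposes that all of $V'$ has core time at most $t^*_{i-1}$, which is part of what is being proved. With that containment in place your induction closes, and your treatment of the edge core times coincides with the paper's.
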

\begin{proof}
    We first show that Algorithm~\ref{alg:decomp} correctly computes the $k$-core times of all vertices at Lines 5-11.
    Let $w$ be a vertex in $V_0$, $\sigma_{T_s}(w,\mathcal{G}_0)$ be its $k$-core time, and $t$ be the value to be assigned to $\sigma_{T_s}(w,\mathcal{G}_0)$ at Line 8. We need to show that $\sigma_{T_s}(w,\mathcal{G}_0)$ is equal to $t$. In general, there are two cases. \underline{First}, we show that $\sigma_{T_s}(w,\mathcal{G}_0)\leq t$. Consider the iterative peeling procedure at Lines 5-11. Let $v'$ be the first vertex to be removed from $G'$ such that $d_{T_s}(v',G')=t$ (at Line 6). Note that such a vertex $v'$ exists obviously (which could be either $w$ or one vertex removed before). Clearly, $\mathcal{G}'_{[T_s,t]}$ is a subgraph of the temporal $k$-core $\mathcal{T}_{[T_s,t]}(\mathcal{G}_0)$ since every vertex in $\mathcal{G}'$ has $k$-degree time $d_{T_s}(\cdot,\mathcal{G}')$ at most $t$, and thus, by definition, every vertex has at least $k$ neighbors in $\mathcal{G}'_{[T_s,t]}$. Then, we can derive that the $k$-core time $\sigma_{T_s}(w,\mathcal{G}_0)$ is at most $t$ (since $\mathcal{T}_{[T_s,t]}(\mathcal{G}_0)$ contains $w$ based on the above discussion). \underline{Second}, we show that $\sigma_{T_s}(w,\mathcal{G}_0)\geq t$. Consider the iteration in which $w$ is removed from $V'$. We can easily derive that $d_{T_s}(w,\mathcal{G}')\geq t$ by Line 7. Furthermore, we have $d_{T_s}(w,\mathcal{G}_0)\geq d_{T_s}(w,\mathcal{G}')\geq t$ based on the update process at Lines 10-11 (note that the degree times are non-decreasing after the updates). Clearly, we easily derive that $\sigma_{T_s}(w,\mathcal{G}_0)\geq d_{T_s}(w,\mathcal{G}_0)$ based on definition. Then, we have $\sigma_{T_s}(w,\mathcal{G}_0)\geq t$. \underline{Finally}, combining the above, we conclude that $\sigma_{T_s}(w,\mathcal{G}_0)= t$.

    We then show that Algorithm~\ref{alg:decomp} returns the $k$-core times of all detemporalized edges in $\mathcal{G}_0$ by contradiction. Consider an edge $(u,v)$. In general, there are two cases. \underline{First}, if $\sigma_{T_s}(u,v,\mathcal{G}_0)<\sigma_{T_s}(u,\mathcal{G}_0)$ or $\sigma_{T_s}(u,v,\mathcal{G}_0)<\sigma_{T_s}(v,\mathcal{G}_0)$, the detemporalized $k$-core $G^*_{[T_s,t]}$ (where $t=\sigma_{T_s}(u,v,\mathcal{G}_0)$ and $G^*=G_0$) cannot contain $(u,v)$ since either $u$ or $v$ is not in $G_{[T_s,t]}$, which leads to a contradiction. \underline{Second}, if $\sigma_{T_s}(u,v,\mathcal{G}_0)<Sup(u,v,\mathcal{G}_0)$, the detemporalized $k$-core $G^*_{[T_s,t]}$ cannot contain $(u,v)$ based on the definition of support time, which leads to a contradiction.
\end{proof}

\smallskip
\noindent\textbf{Time complexity analysis}. We analyze the complexity of \texttt{CoreT\_Init} (Algorithm~\ref{alg:decomp}).
First, we compute the support times $Sup(u,v,\mathcal{G}_0)$ for all edges $(u,v)$ and the $k$-degree times for all vertices in $O(|E_0|)$ time (Lines 1-2). During the iterative vertex removal process (Lines 5-11), we maintain buckets indexed by support times, assigning each vertex to the bucket corresponding to its current $k$-degree time. At each step, we remove the vertex with the largest $k$-degree time by scanning the buckets in decreasing order. After removing a vertex $v^*$, we update the $k$-degree times of its neighbors in $O(d)$ time, where $d$ is the degree of $v^*$ (Lines 10-11). Finally, the core time for each edge $(u,v)$ is computed as the maximum of the core times of its endpoints and its own support time, which takes $O(1)$ time per edge and $O(|E_0|)$ in total. Thus, the overall procedure runs in $O(|E_0|)$ time.

\subsection{\texttt{CoreT\_Update}: Iterative Core Time Update}
\label{subsec:coreT-update}
Consider the process of refining $\mathcal{G}_{t-1}$ to $\mathcal{G}_t$ and computing $\sigma_{T_s+t} (\cdot,\mathcal{G}_{T_s+t})$ based on $\sigma_{T_s+t-1}(\cdot,\mathcal{G}_{T_s+t-1})$ at the $t$-th $(1\leq t\leq \Delta-1)$ iteration of \texttt{CoreT}. Recall that $\mathcal{G}_t$ is a subgraph of $\mathcal{G}_{t-1}$ and must contain all temporal $k$-cores with start timestamp $T_s+t$. Typically, a smaller $\mathcal{G}_t$ is preferred for improved practical performance.
We first introduce the following pruning rule for refining $\mathcal{G}_{t-1}$ to $\mathcal{G}_t$.

\begin{itemize}
    \item \textbf{Pruning rule.} All temporal edges with timestamp $T_s+t-1$ can be removed from $\mathcal{G}_{t-1}$ at the $t$-th iteration of \texttt{CoreT}.
\end{itemize}

The correctness of this rule is obvious, as any temporal $k$-core to be output at the $t$-th iteration has the start timestamp equal to $T_s+t$ and thus cannot contain those temporal edges. As a result, we propose to refine $\mathcal{G}_{t-1}$ by dropping all those temporal edges with the timestamp $T_s+t-1$, formally,
\begin{equation}
    \mathcal{G}_t:=(\mathcal{G}_{t-1})_{[T_s+t,T_e]}.
\end{equation}
To compute $\sigma_{T_s+t}(\cdot,\mathcal{G}_{t})$, we have the following observations.

\begin{align*}
\forall v \in V_t :\quad \sigma_{T_s+t-1}(v,\mathcal G_{t-1}) & \le \sigma_{T_s+t-1}(v,(\mathcal G_{t-1})_{[t_s+t,T_e]}) \\
& = \sigma_{T_s+t-1}(v,\mathcal{G}_t) \\
& = \sigma_{T_s+t}(v,\mathcal{G}_t).
\end{align*}
\vspace{-20pt}
\begin{align*}
\forall (u,v) \in E_t:\quad &\sigma_{T_s+t-1}(u,v,\mathcal{G}_{t-1})\\
&\leq \sigma_{T_s+t-1}(u,v,(\mathcal{G}_{t-1})_{[T_s+t,T_e]}) \\
&=\sigma_{T_s+t-1}(u,v,\mathcal{G}_t) \\
&= \sigma_{T_s+t}(u,v,\mathcal{G}_t).
\end{align*}

These equations follow directly from the definitions; thus, we omit the detailed proofs.

\smallskip
\noindent\textbf{Overview of \texttt{Core\_Update}}. Motivated by the above observations, we propose a procedure for refining $\mathcal{G}_{t-1}$ and simultaneously computing $\sigma_{T_s+t}(\cdot,\mathcal{G}_{t})$. The idea is to remove all temporal edges $(u,v,T_s+t-1)$ from $\mathcal{G}_{t-1}$, and then update the $k$-core times $\sigma_{T_s+t-1}(\cdot,\mathcal{G}_{t-1})$ for certain vertices and edges. Clearly, after removing all such temporal edges with timestamp $T_s+t-1$, the $k$-core times $\sigma_{T_s+t-1}(\cdot,\mathcal{G}_{t-1})$ are equal to $\sigma_{T_s+t}(\cdot,\mathcal{G}_t)$ based on the above equations. The remaining problem is how to update $\sigma_{T_s+t-1}(\cdot,\mathcal{G}_{t-1})$ after removing a temporal edge or a vertex from $\mathcal{G}_{t-1}$. Below, we elaborate on the details.

\smallskip
\noindent\textbf{Update rules}.
We start with the following two lemmas. 
\begin{lemma}
\label{lemma:update-rule-1}
    Let $x$ be a timestamp in $[0,t_{\max}]$ and $v$ be a vertex in $V$. The $k$-core time of $v$, i.e., $\sigma_x(v,\mathcal{G})$, is equal to the $k$-th smallest value among the $k$-core times of its incident edges, i.e., $\{\sigma_x(v,u,\mathcal{G})\mid (u,v)\in E\}$.
\end{lemma}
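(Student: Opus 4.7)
The plan is to prove the two-sided inequality $\sigma_x(v,\mathcal{G}) \le \tau$ and $\sigma_x(v,\mathcal{G}) \ge \tau$, where $\tau$ denotes the $k$-th smallest value in the multiset $\{\sigma_x(v,u,\mathcal{G}) \mid (u,v) \in E\}$ (setting $\tau = +\infty$ if $v$ has fewer than $k$ incident edges or fewer than $k$ incident edges with finite $k$-core time). The key observation that I will use throughout is a structural characterization of when an edge sits in the detemporalized $k$-core: the edge $(u,v)$ belongs to the $k$-core of $G_{[x,t]}$ exactly when both endpoints $u,v$ are vertices of that $k$-core and $\mathit{Sup}(u,v,\mathcal{G}) \le t$. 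Consequently, $\sigma_x(u,v,\mathcal{G})$ is finite and at most $t$ if and only if $u$ and $v$ are both in the $k$-core of $G_{[x,t]}$ and $(u,v)$ has appeared by time $t$.

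For the direction $\sigma_x(v,\mathcal{G}) \le \tau$, I would argue as follows. By the definition of $\tau$, at least $k$ incident edges $(u_1,v),\ldots,(u_k,v)$ satisfy $\sigma_x(u_i,v,\mathcal{G}) \le \tau$, so by the observation above, each such edge lies in the $k$-core of $G_{[x,\tau]}$. In particular, $v$ appears as an endpoint of at least $k$ edges of this $k$-core, so $v$ is a vertex of the $k$-core of $G_{[x,\tau]}$. By definition of $\sigma_x(v,\mathcal{G})$, this forces $\sigma_x(v,\mathcal{G}) \le \tau$.

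For the reverse direction $\sigma_x(v,\mathcal{G}) \ge \tau$, let $t = \sigma_x(v,\mathcal{G})$; if $t = +\infty$ there is nothing to prove, so assume $t$ is finite. Then $v$ belongs to the $k$-core of $G_{[x,t]}$, so it has at least $k$ neighbors $u_1,\ldots,u_k$ in that $k$-core. Each edge $(u_i,v)$ thus has both endpoints in the $k$-core of $G_{[x,t]}$ and satisfies $\mathit{Sup}(u_i,v,\mathcal{G}) \le t$ (because the edge is actually present in $G_{[x,t]}$), so $\sigma_x(u_i,v,\mathcal{G}) \le t$. Hence at least $k$ incident edges have $k$-core time at most $t$, which gives $\tau \le t = \sigma_x(v,\mathcal{G})$.

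The main obstacle, and the place where I would be careful in writing, is the structural equivalence I invoke in both directions: that an edge is in the $k$-core of $G_{[x,t]}$ iff both endpoints are in that $k$-core and the edge has support time at most $t$. This follows from the fact that the $k$-core is an \emph{induced} subgraph of $G_{[x,t]}$ together with the definition of $\sigma_x$ on edges via the detemporalized $k$-core, but to make the argument airtight one must also handle the case $\tau = +\infty$ (either $v$ has fewer than $k$ incident edges at all, or fewer than $k$ incident edges with finite core time); in that case the second direction degenerates and the first direction becomes $\sigma_x(v,\mathcal{G}) = +\infty$, which follows because $v$ can never accumulate $k$ neighbors inside a $k$-core of $G_{[x,t]}$ for any finite $t$.
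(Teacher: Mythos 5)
Your proof is correct and follows essentially the same route as the paper's: both directions are established by counting incident edges that lie in the $k$-core of the relevant detemporalized graph, with the paper phrasing the ``$\tau$ cannot be smaller'' direction as a contradiction where you phrase it directly. Your version is in fact slightly more careful than the paper's, since you make the induced-subgraph characterization of edge membership explicit and handle the $\tau=+\infty$ degenerate case; the only nitpick is that $Sup(u,v,\mathcal{G})$ as defined minimizes over \emph{all} timestamps rather than those $\ge x$, but your argument never relies on the problematic direction of that equivalence.
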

\begin{proof}
Let \( t = \sigma_x(v, \mathcal{G}) \) be the $k$-core time of vertex \( v \) with respect to timestamp \( x \). By definition, \( v \) belongs to the $k$-core of \( G_{[x, t]} \), which implies that it has at least \( k \) incident edges that also lie in the $k$-core of \( G_{[x, t]} \). For each such edge \( (u,v) \), it holds that \( \sigma_x(u,v, \mathcal{G}) \leq t \). Therefore, the $k$-th smallest core time among all incident edges of \( v \) is at most \( t \). If this value were strictly smaller than \( t \), then \( v \) would belong to the $k$-core of some \( G_{[x, t']} \) with \( t' < t \), contradicting the minimality of \( \sigma_x(v, \mathcal{G}) \).
\end{proof}

\begin{lemma}
\label{lemma:update-rule-2}
    Let $(u,v)$ be an edge in $E$ and $x$ be a timestamp in $[0,t_{\max}]$. We have
    \begin{equation}
        \sigma_x(u,v,\mathcal{G}) = \max\{Sup(u,v,\mathcal{G}),\sigma_x(u,\mathcal{G}),\sigma_x(v,\mathcal{G})\}.
    \end{equation}
\end{lemma}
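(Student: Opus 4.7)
\bigskip
\noindent\textbf{Proof proposal for Lemma~\ref{lemma:update-rule-2}.}

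The plan is a standard two-sided bound. Let $M := \max\{Sup(u,v,\mathcal{G}), \sigma_x(u,\mathcal{G}), \sigma_x(v,\mathcal{G})\}$ and $t^* := \sigma_x(u,v,\mathcal{G})$. I will argue $t^*\le M$ (feasibility at $t=M$) and $t^*\ge M$ (necessity), and the lemma follows. The one structural fact I will rely on, beyond the definitions already in the excerpt, is the well-known monotonicity of the $k$-core: if $H$ is a subgraph of $H'$, then the $k$-core of $H$ is contained in the $k$-core of $H'$ (every vertex that already has $k$ neighbors in $H$ still has $k$ neighbors in $H'$).

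First, for the upper bound $t^*\le M$, I would consider the detemporalized graph $G_{[x,M]}$. Since $Sup(u,v,\mathcal{G})\le M$, the edge $(u,v)$ is present in $G_{[x,M]}$. Since $\sigma_x(u,\mathcal{G})\le M$, the definition of $k$-core time gives that $u$ belongs to the $k$-core of $G_{[x,\sigma_x(u,\mathcal{G})]}$, and monotonicity of $k$-cores with respect to the ambient graph then implies $u$ belongs to the $k$-core of $G_{[x,M]}$ as well. The same reasoning applies to $v$. Because the $k$-core is an induced subgraph, the presence of both endpoints together with the edge $(u,v)$ in $G_{[x,M]}$ forces $(u,v)$ to lie in the $k$-core of $G_{[x,M]}$. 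By definition of $\sigma_x(u,v,\mathcal{G})$ as the minimum such timestamp, we conclude $t^*\le M$.

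Second, for the lower bound $t^*\ge M$, I would unpack the defining property of $t^*$: the edge $(u,v)$ belongs to the $k$-core of $G_{[x,t^*]}$. This forces three things simultaneously, each of which directly bounds one term inside the max. The edge $(u,v)$ must actually exist in $G_{[x,t^*]}$, so some temporal copy $(u,v,t')\in\mathcal{E}$ with $t'\le t^*$ exists, giving $Sup(u,v,\mathcal{G})\le t^*$. Since the $k$-core is induced, both endpoints $u$ and $v$ belong to the $k$-core of $G_{[x,t^*]}$, and hence by definition of vertex $k$-core times we have $\sigma_x(u,\mathcal{G})\le t^*$ and $\sigma_x(v,\mathcal{G})\le t^*$. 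Taking the maximum of these three inequalities yields $M\le t^*$.

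The main (and only) subtlety I anticipate is the handling of the $+\infty$ convention: when $u$ or $v$ never enters any $k$-core starting from $x$, or when $(u,v)$ has no temporal copy in $[x,t_{\max}]$, both sides become $+\infty$, and the two bounds above go through under the convention $t\le +\infty$ for all $t$. No combinatorial work is needed beyond that; the entire argument rests on unwinding the three definitions and invoking $k$-core monotonicity once.
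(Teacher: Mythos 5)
Your proof is correct, and its necessity half is exactly the paper's argument: the paper lists the same three conditions (the edge must exist in $[x,t]$, and both endpoints must lie in the $k$-core of $G_{[x,t]}$) and immediately concludes the equality with ``Hence.'' Where you go beyond the paper is in explicitly proving the other direction: that at $t=M$ the edge actually \emph{does} enter the $k$-core, via monotonicity of $k$-cores under ambient-graph inclusion plus the fact that the $k$-core is an induced subgraph, so an edge whose endpoints are both in the core and which is present in the underlying graph must itself be in the core. The paper leaves this sufficiency step implicit, so your two-sided argument is the more complete one; your handling of the $+\infty$ convention is likewise a detail the paper skips. The only caveat worth flagging (shared by the paper's own proof, so not a gap in yours) is that the feasibility direction tacitly assumes $Sup(u,v,\mathcal{G})\ge x$ --- i.e., that the earliest temporal copy of $(u,v)$ is not strictly before $x$ --- which holds in every context where the lemma is invoked because $\mathcal{G}$ has already been restricted to $[x,T_e]$.
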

\begin{proof}
For $(u,v)$ to be included in the $k$-core of $\mathcal{G}_{[x,t]}$, the following conditions must hold simultaneously:
\begin{enumerate}
    \item The edge $(u,v)$ exists within the interval $[x,t]$, i.e., $t \geq Sup(u,v,\mathcal{G})$, where $Sup(u,v,\mathcal{G})$ denotes the earliest timestamp of $(u,v)$ in $\mathcal{G}$.
    \item Both vertices $u$ and $v$ belong to the $k$-core of $\mathcal{G}_{[x,t]}$, so $t \geq \sigma_x(u,\mathcal{G})$ and $t \geq \sigma_x(v,\mathcal{G})$.
\end{enumerate}
Hence, $\sigma_x(u,v,\mathcal{G}) = \max \{ Sup(u,v,\mathcal{G}), \sigma_x(u,\mathcal{G}), \sigma_x(v,\mathcal{G}) \}$.
\end{proof}

These lemmas clarify the dependencies between the $k$-core times of vertices and edges. When a set of temporal edges $\mathcal{R}=\{(u,v,x)\in \mathcal{E}_{t-1}\mid x=T_s+t-1\}$ is removed from $\mathcal{G}_{t-1}$, the following updates are required, according to the above lemmas: (1) the support time of each edge in $\mathcal{R}$, (2) the $k$-core time of each edge in $\mathcal{R}$, and (3) the $k$-core time of each vertex incident to an edge in $\mathcal{R}$. Furthermore, these updates may trigger further updates due to changes in core times. A naive approach would be to iteratively perform updates until the $k$-core times stabilize, but this can lead to redundant computations due to repeated updates. To improve efficiency, we propose an update ordering.

\smallskip
\noindent\textbf{Update ordering}. Consider the iterative update process described above. For consistency, let $\widetilde{\sigma}_{T_s+t}(\cdot, \mathcal{G}_t)$ denote the tentative $k$-core time of each vertex and edge in $\mathcal{G}_t$ during the update process. Initially, $\widetilde{\sigma}_{T_s+t}(\cdot,\mathcal{G}_t)$ is set to $\sigma_{T_s+t-1}(\cdot,\mathcal{G}_{t-1})$. 
We propose to process updates in \emph{descending order} of $\widetilde{\sigma}_{T_s+t}(\cdot,\mathcal{G}_t)$. Specifically, among the vertices and edges that will trigger updates, we prioritize those associated with the largest value of $\widetilde{\sigma}_{T_s+t}(\cdot, \mathcal{G}_t)$. The rationale is that \emph{subsequent updates triggered by other vertices or edges will not affect these elements again}.
To illustrate, consider the case where vertex $u$ has the current largest tentative $k$-core time $t^*$ and will trigger updates. Then, we can deduce that any edge $(u,v)$ (that will trigger updates) in $\mathcal{G}_t$ has the tentative $k$-core time at most $t^*$. In addition, for any edge $(u,v)$ (that will trigger updates), the tentative $k$-core time of $v$ is also at most $t^*$; otherwise, by Lemma~\ref{lemma:update-rule-2}, the tentative $k$-core time of $(u,v)$ would be greater than $t^*$. Therefore, after processing the updates triggered by $u$, any edge $(u,v)$ will have the tentative $k$-core time at most $t^*$ based on Lemma~\ref{lemma:update-rule-2}. In subsequent updates triggered by $(u,v)$, it is clear from Lemma~\ref{lemma:update-rule-1} that the tentative $k$-core time of $u$ remains unchanged.
To ease the presentation, we omit the formal proof.

Motivated by all the above, we summarize the proposed method \texttt{Core\_Update} in Algorithm~\ref{alg:update}. In particular, we first update the $k$-core time of those removed edges (Lines 2-6). Then, we iteratively conduct the updates triggered by a vertex or an edge in the descending order of the tentative $k$-core time (Lines 7-22). Finally, we return the refined graph $\mathcal{G}_t$ and the corresponding $k$-core time ${\sigma}_{T_s+t}(\cdot,\mathcal{G}_t)$. The correctness can be guaranteed based on the above discussion.

\begin{algorithm}[t]
    \caption{\texttt{Core\_Update}}
    \label{alg:update}
    \KwIn{Temporal graph $\mathcal{G}_{t-1}$, $\sigma_{T_s+t-1}(\cdot,\mathcal{G}_{t-1})$}
    \KwOut{$\mathcal{G}_t$ and $\sigma_{T_s+t}(\cdot,\mathcal{G}_t)$}
    
    $S \gets \emptyset,  \widetilde{\sigma}_{T_s+t-1}(\cdot,\mathcal{G}_t)\gets\sigma_{T_s+t-1}(\cdot,\mathcal{G}_{t-1})$\;
    
    \ForEach{edge $(u,v,T_s+t-1) \in \mathcal{E}_{t-1}$}{
        Compute the support time $Sup(u,v,\mathcal{G}_1)$\;
        \If{$Sup(u,v,\mathcal{G}_1) > \widetilde{\sigma}_{T_s+t-1}(u,v,\mathcal{G}_t)$}{
            $\widetilde{\sigma}_{T_s+t-1}(u,v,\mathcal{G}_t) \gets Sup(u,v,\mathcal{G}_1)$\;
            $S \gets S \cup \{(u,v)\}$\;
        }
    }
    
    \While{$S \neq \emptyset$}{
        $s \gets \arg\max_{x \in S} \widetilde{\sigma}_{T_s+t-1}(x,\mathcal{G}_t)$\;
        $S \gets S \setminus \{s\}$\;
        $\sigma_{T_s+t}(s,\mathcal{G}_t) \gets \widetilde{\sigma}_{T_s+t-1}(s,\mathcal{G}_t)$\;

        \uIf{$s$ is a vertex}{
            \ForEach{edge $(s,v)$ incident to $s$}{
                $\delta \gets \max \{ \widetilde{\sigma}_{T_s+t-1}(s,\mathcal{G}_t), \widetilde{\sigma}_{T_s+t-1}(v,\mathcal{G}_t) \}$\;
                \If{$\widetilde{\sigma}_{T_s+t-1}(s,v,\mathcal{G}_t) < \delta$}{
                    $\widetilde{\sigma}_{T_s+t-1}(s,v,\mathcal{G}_t) \gets \delta$\;
                    $S \gets S \cup \{(s,v)\}$\;
                }
            }
        }
        \ElseIf{$s$ is an edge}{
            \ForEach{vertex $v$ in edge $s$}{
                $\delta \gets$ $k$-th smallest $\widetilde{\sigma}_{T_s+t-1}(\cdot,\mathcal{G}_t)$ among edges incident to $v$\;
                \If{$\widetilde{\sigma}_{T_s+t-1}(v,\mathcal{G}_t) < \delta$}{
                    $\widetilde{\sigma}_{T_s+t-1}(v,\mathcal{G}_t) \gets \delta$\;
                    $S \gets S \cup \{v\}$\;
                }
            }
        }
    }
    \Return $(\mathcal{G}_t)_{[T_s+t,T_e]}$, ${\sigma}_{T_s+t}(\cdot,\mathcal{G}_t)$;
\end{algorithm}

\smallskip
\noindent\textbf{Time complexity analysis}. We consider the complexity of \texttt{CoreT\_Updte} (Algorithm~\ref{alg:update}).
In Lines 2–6, the algorithm scans all temporal edges $(u,v,t)$ that are active at time $T_s + t - 1$. Each such edge is visited once to compute its current support time and, if necessary, update its tentative core time. This step takes time linear in the number of temporal edges at time $T_s + t - 1$,
Lines 7–22 perform an iterative propagation of updated core times throughout the graph. This process is divided into two branches depending on whether a vertex or an edge is extracted from the candidate set $S$. According to the update ordering enforced in Algorithm~\ref{alg:update}, each vertex or edge enters and leaves the candidate set once.
When a vertex $s$ is selected (Lines 11–16), the algorithm examines each of its incident edges. If the degree of $s$ is $d$, this step takes $O(d)$ time. Since each edge is inspected at most once, the total cost for all vertex-triggered updates is bounded by $O(|\mathcal E_{[T_e-T_s]}|)$.
Prior to invoking \texttt{Core\_Update}, we maintain for each vertex a sorted list of its incident edges in ascending order of tentative $k$-core time. Since these core times are updated in a non-decreasing manner during the propagation process, each updated edge can be reinserted into the appropriate position in the list via a forward scan. This incremental maintenance incurs $O(d)$ time per vertex, where $d$ is the vertex degree, leading to an overall cost of $O(|\mathcal{E}_{[T_s,T_e]}|)$ per iteration. This preprocessing allows the $k$-th smallest tentative core time to be retrieved in $O(1)$ time during Line 18. Consequently, each edge-triggered update in Lines 17–22 is performed in constant time. Thus, the total cost of edge-triggered updates is also bounded by $O(|\mathcal E_{[T_e-T_s]}|)$.
Combining all components, the overall time complexity of \texttt{Core\_Update} is linear in the number of edges, that is, $O(|\mathcal E_{[T_e-T_s]}|)$.

\section{Experiments}
In this section, we conduct experiments to verify both the efficiency and effectiveness of our proposed algorithm \texttt{CoreT} in comparison with the state-of-the-art algorithm \texttt{OTCD}~\cite{yang2023scalable}. All experiments are performed on a Linux machine equipped with an Intel Xeon 2.60 GHz CPU and 256 GB RAM. The algorithms are implemented in C++ and compiled using g++ -O3.

\smallskip
\noindent\textbf{Dataset}.
Our experiments are conducted on temporal graphs from SNAP\footnote{https://snap.stanford.edu/data/} and KONECT\footnote{http://konect.cc/networks/}, covering all datasets in~\cite{yang2023scalable}.

\begin{table}[t]
    \centering
    \caption{Statistics of all datasets}
    \label{tab:dataset}
    \begin{tabular}{lccc}
    \toprule
        \textbf{Datasets} & \textbf{$|V|$} & \textbf{$|\mathcal E|$} & \textbf{$t_{\max}$} \\
    \midrule
        CollegeMsg (CM) & 1,862 & 59,835 & 58,910 \\
        sx-superuser (SU) & 159,481 & 144,339 & 1,437,198 \\
        email-Eu-core (EE) & 966 & 332,334 & 207,879 \\
        sx-mathoverflow (MO) & 21,980 & 506,550 & 505,783 \\
        sx-askubuntu (AU) & 127,047 & 964,437 & 960,865 \\
        wiki-talk-temporal (WT) & 1,120,716 & 7,833,140 & 7,375,041 \\
        dblp-coauthor (DC) & 1,824,702 & 29,487,744 & 77 \\ 
        flickr-growth (FG) &  2,245,501 &  33,139,584 & 134 \\
        wikipedia-growth (WG) & 1,870,710 & 39,953,145 & 2,198 \\
    \bottomrule
    \end{tabular}

\end{table}

\smallskip
\noindent \underline{\textbf{Exp-1: Time comparison with the baseline.}}
To systematically evaluate the impact of both the core threshold~$k$ and the query interval size on algorithmic efficiency, we consider two sets of parameter values. Specifically, the core threshold $k$ varies from 2 to 20, and the query interval is configured as $[T_s, T_e] = [0,\ \alpha \times t_{\max}]$, where $\alpha \in \{0.2,\,0.4,\,0.6,\,0.8,\,1.0\}$. For each pair of \((k,\alpha)\), we record the execution/response time in milliseconds (i.e., $10^{-3}$ seconds).

\begin{figure}[t]
    \centering
    \includegraphics[width=0.9\linewidth]{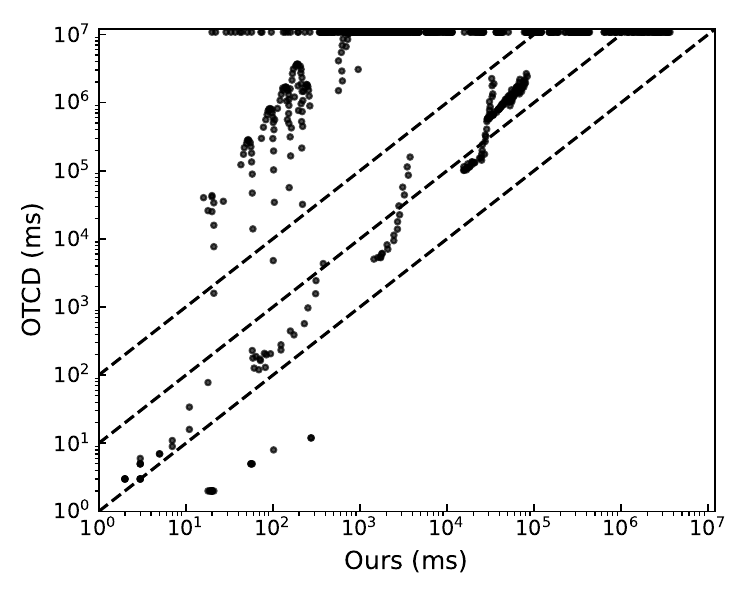}
    \caption{Scatter plot of running times for \texttt{CoreT} and \texttt{OTCD}.}
    \label{fig:scatter-runtime}
\end{figure}

We first present a unified scatter plot (Figure~\ref{fig:scatter-runtime}) that illustrates the running times of both \texttt{OTCD} and our proposed algorithm \texttt{CoreT} across all test instances. This provides a comprehensive performance overview. As shown in the figure, the vast majority of points lie significantly above the diagonal, often by orders of magnitude, demonstrating that \texttt{CoreT} consistently achieves much lower running times compared to \texttt{OTCD}. In many cases, the points corresponding to \texttt{OTCD} appear near or beyond the upper bound of the plot, indicating timeouts (i.e., running time exceeding three hours), while the corresponding \texttt{CoreT} times remain well within seconds or minutes. We remark that while \texttt{OTCD} is marginally faster than \texttt{CoreT} in a small number of specific instances, as indicated by points slightly below the diagonal, \texttt{CoreT} still completes within one second in all such cases.

\smallskip
\noindent \underline{\textbf{Exp-2: Time comparison for 5-core with finer intervals.}}
To further illustrate the effect of interval length on performance, we conduct a fine-grained scalability study using the CM and EE datasets (Figure~\ref{fig:sweep}). These datasets are chosen since DC, WG, and FG have $t_{\max}$ values that are too small for fine slicing, whereas \texttt{OTCD} times out on larger datasets such as SU, MO, AU, and WT. Specifically, we fix \(k=5\) and vary \(\alpha\) from 0 to 1 in increments of 0.01. This results in approximately 600 and 2,000 additional timestamps per step for CM and EE datasets, respectively, thereby providing a detailed view of runtime trends as temporal coverage increases.
From Figure~\ref{fig:sweep}, we can see that both two datasets exhibit similar runtime trends. Across all interval lengths, \texttt{CoreT} consistently outperforms \texttt{OTCD}. As the temporal interval length increases (i.e., with increasing $\alpha$), the runtime of \texttt{OTCD} grows rapidly. For example, \texttt{OTCD} exceeds the 3-hour timeout threshold when the number of timestamps reaches approximately 120K in the EE dataset, while \texttt{CoreT} remains efficient throughout the entire range.

\begin{figure}[t]
    \centering
    \begin{subfigure}
        [t]{0.5\linewidth}
        \centering
        {\includegraphics[width=\textwidth]{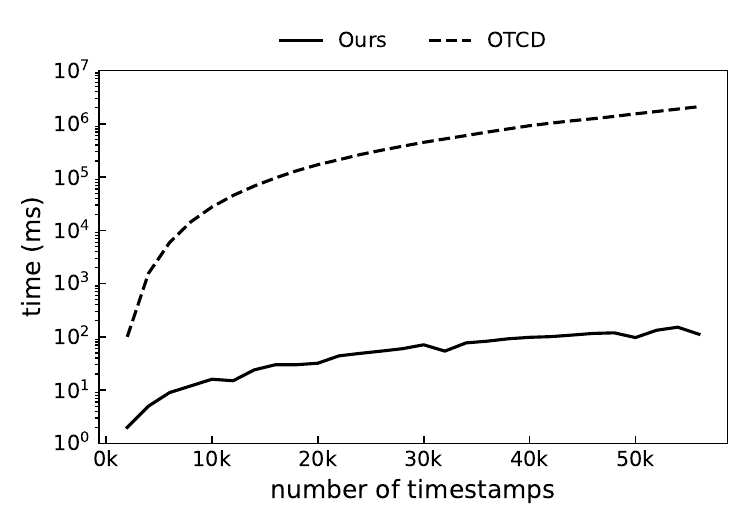}}
        \caption{CM}
        \label{fig:timestamp-sweep-CM}
    \end{subfigure}\hspace{-0.7em}
    \begin{subfigure}
        [t]{0.5\linewidth}
        \centering
        {\includegraphics[width=\textwidth]{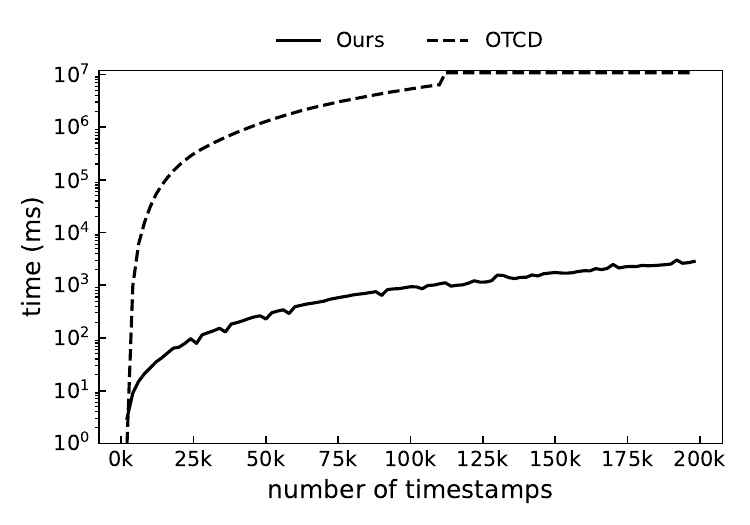}}
        \caption{EE}
        \label{fig:timestamp-sweep-EE}
    \end{subfigure}
    \caption{Runtime comparison over increasing time interval.}
    \label{fig:sweep}
\end{figure}

\begin{figure}
    \centering
    \begin{subfigure}
        [t]{0.5\linewidth}
        \centering
        {\includegraphics[width=\textwidth]{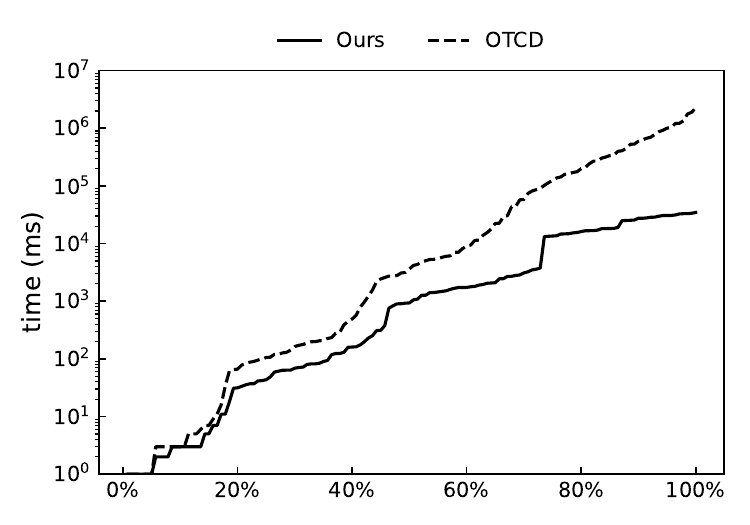}}
        \caption{DC}
        \label{fig:timestamp-sweep-CM}
    \end{subfigure}\hspace{-0.7em}
    \begin{subfigure}
        [t]{0.5\linewidth}
        \centering
        {\includegraphics[width=\textwidth]{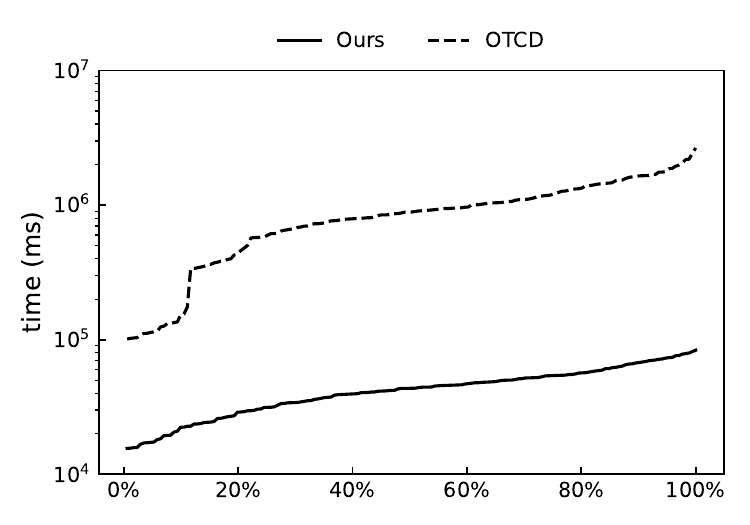}}
        \caption{FG}
        \label{fig:timestamp-sweep-EE}
    \end{subfigure}
    \caption{Cumulative distribution of query response times.}

    \label{fig:cdf}
\end{figure}

\smallskip
\noindent \underline{\textbf{Exp-3: Time comparison with random queries.}}
To rigorously compare the query efficiency of \texttt{CoreT} and \texttt{OTCD} under short query intervals, we conduct a focused empirical study on the DC and FG datasets. These datasets are selected for their relatively small values of $t_{\max}$ (i.e., 77 and 134), which help minimize variance caused by interval length randomness and enable a fairer comparison between the two methods. For each dataset, we randomly generate 500 queries with varying interval lengths and $k$ ranging from 2 to 20, excluding queries without valid $k$-core results.

Figure~\ref{fig:cdf} presents the cumulative distribution functions (CDFs) of query times for both algorithms. The horizontal axis represents query time, and the vertical axis indicates the percentage of queries completed within that time. At lower query times, both algorithms show similar efficiency; however, as query time increases, the curve for \texttt{OTCD} consistently remains below that of \texttt{CoreT}, indicating that \texttt{CoreT} completes a larger proportion of queries in less time overall. Although \texttt{OTCD} occasionally achieves slightly faster responses in especially sparse intervals, such cases are rare and the magnitude of improvement is negligible. Overall, while \texttt{OTCD} can be effective under narrowly constrained conditions, \texttt{CoreT} demonstrates broader reliability and efficiency across diverse temporal configurations.

\smallskip
\noindent \underline{\textbf{Exp-4: Detailed time comparison with the baseline.}}
We show the detailed runtime analyses in Figures~\ref{fig:core-ColledgeMsg} to~\ref{fig:core-wikipedia-growth}. Note that no cores are identified for the DC dataset at $\alpha=0.2$; hence, there are no corresponding results shown in Figure~\ref{fig:dblp-coauthor}. 

We have the following observations. 
\textbf{First}, \texttt{CoreT} demonstrates consistently strong performance across all datasets and parameter settings, exhibiting excellent scalability and runtime stability. In contrast, \texttt{OTCD} suffers significant slowdowns as the interval length (i.e., $\alpha$) increases, and frequently exceeds the three-hour time limit on large or dense graphs. In many scenarios, \texttt{CoreT} achieves speedups of several orders of magnitude. For instance, on the CM dataset with full temporal span ($\alpha=1.0$) in Figure~\ref{fig:core-ColledgeMsg}(e), our algorithm finishes in only 144 ms for $k=2$, whereas \texttt{OTCD} takes 1,054,508 ms, yielding a speedup of more than 7,300 times. As $k$ increases to 20, the runtime of \texttt{CoreT} remains highly stable (144 -- 220 ms), while the runtime of \texttt{OTCD} continues reaches over $3.11 \times 10^6$ ms.

\textbf{Second}, both algorithms tend to run faster as the core threshold $k$ increases in most cases, which is consistent with the fact that larger values of $k$ typically induce smaller subgraphs with fewer edges to process. This trend is clearly observable on the DC and FG datasets (Figures~\ref{fig:dblp-coauthor} and~\ref{fig:flickr-growth}), where the runtime curves consistently decrease with increasing $k$ across all tested values of $\alpha$. 

\textbf{Third}, although \texttt{OTCD} can outperform \texttt{CoreT} slightly in a few cases where the running time is short, the differences are marginal. On smaller datasets or datasets with very few timestamps, such as CM with $\alpha=0.2$ and $k \ge 12$ (Figure~\ref{fig:core-ColledgeMsg}(a)), \texttt{OTCD} is faster by approximately 1 ms. A similar effect is observed on the DC dataset at $\alpha=0.4$ in Figure~\ref{fig:dblp-coauthor}(b), which contains only 30 distinct timestamps. In such cases, \texttt{OTCD} benefits from its simplicity and low overhead. However, the advantage is minimal (within 1 ms), and quickly vanishes as the number of timestamps or graph size increases. 

\textbf{Finally}, the size of the query interval has a strong impact on the performance of \texttt{OTCD}, whereas \texttt{CoreT} maintains consistently high efficiency. As the interval grows, the running time of \texttt{OTCD} increases rapidly, often exceeding the allowable time limit on larger intervals. For instance, on the medium-sized EE dataset in Figure~\ref{fig:email-Eu-core}, \texttt{OTCD} succeeds only when $\alpha \le 0.4$, but fails to complete within the 3-hour limit for larger values of $\alpha$. In contrast, \texttt{CoreT} completes all five intervals on EE in under 5 seconds. This sensitivity becomes even more obvious on large-scale graphs such as AU, MO, and WT, where \texttt{OTCD} fails to complete any instance, whereas \texttt{CoreT} consistently completes all queries in under 120 seconds.

\begin{figure*}
    \centering
    \subcaptionbox{$\alpha=0.2$}{ \includegraphics[width=0.195\textwidth]{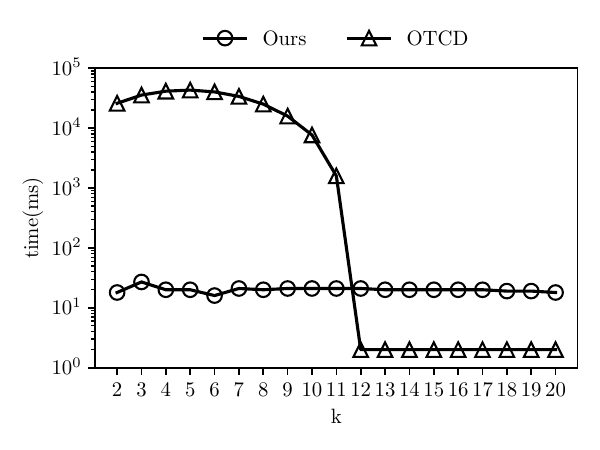} }\hspace{-0.7em}%
    \subcaptionbox{$\alpha=0.4$}{ \includegraphics[width=0.195\textwidth]{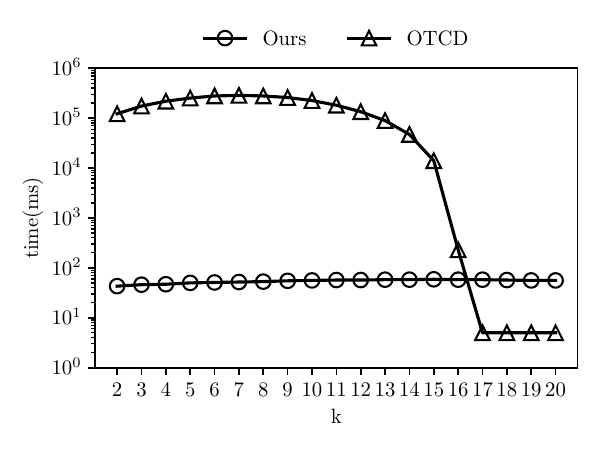} }\hspace{-0.7em}%
    \subcaptionbox{$\alpha=0.6$}{ \includegraphics[width=0.195\textwidth]{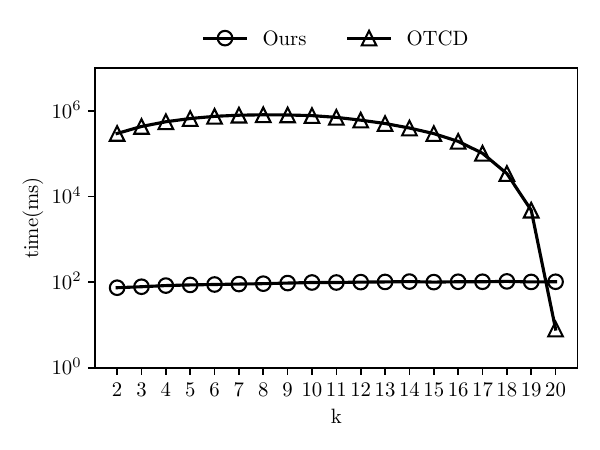} }\hspace{-0.7em}%
    \subcaptionbox{$\alpha=0.8$}{ \includegraphics[width=0.195\textwidth]{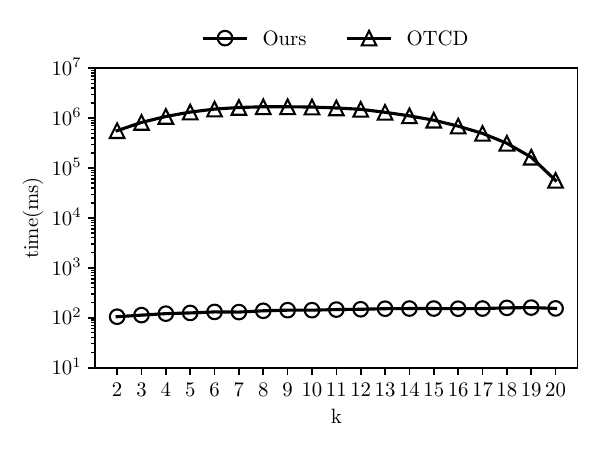} }\hspace{-0.7em}%
    \subcaptionbox{$\alpha=1.0$}{ \includegraphics[width=0.195\textwidth]{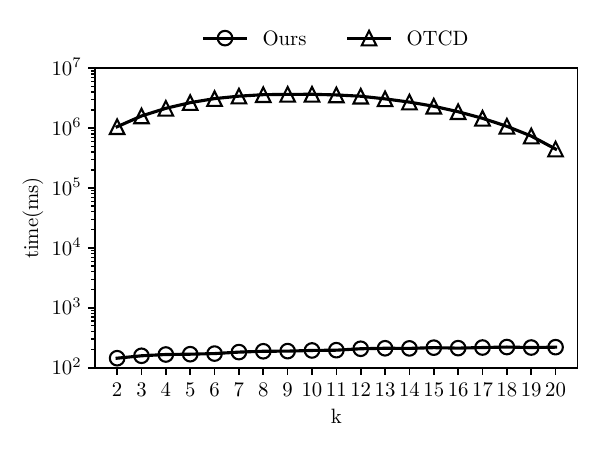} }\hspace{-0.7em}%
    \caption{Runtime performance of \texttt{CoreT} and \texttt{OTCD} on CollegeMsg (CM)}
    \label{fig:core-ColledgeMsg}

\end{figure*}
\begin{figure*}
    \centering
    \subcaptionbox{$\alpha=0.2$}{ \includegraphics[width=0.195\textwidth]{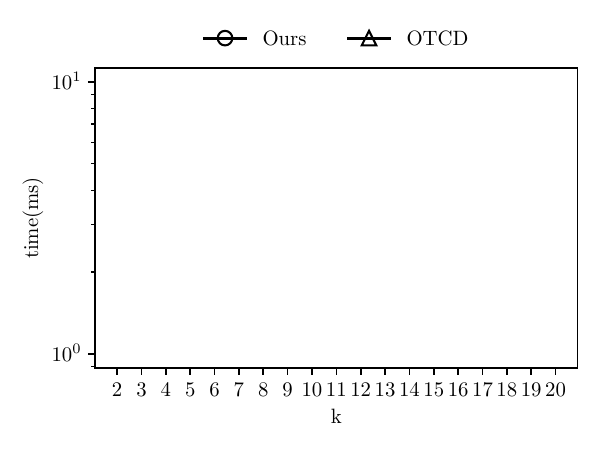} }\hspace{-0.7em}%
    \subcaptionbox{$\alpha=0.4$}{ \includegraphics[width=0.195\textwidth]{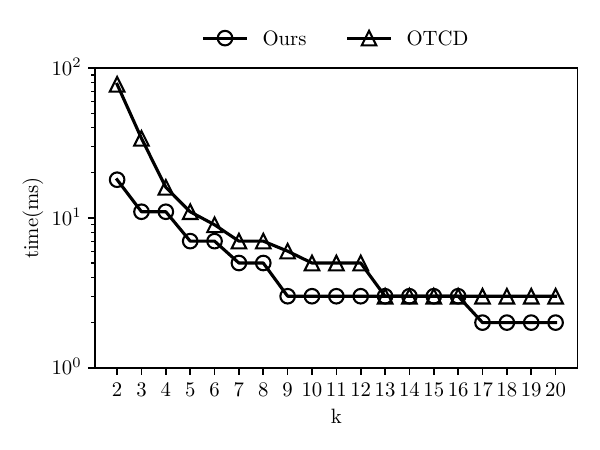} }\hspace{-0.7em}%
    \subcaptionbox{$\alpha=0.6$}{ \includegraphics[width=0.195\textwidth]{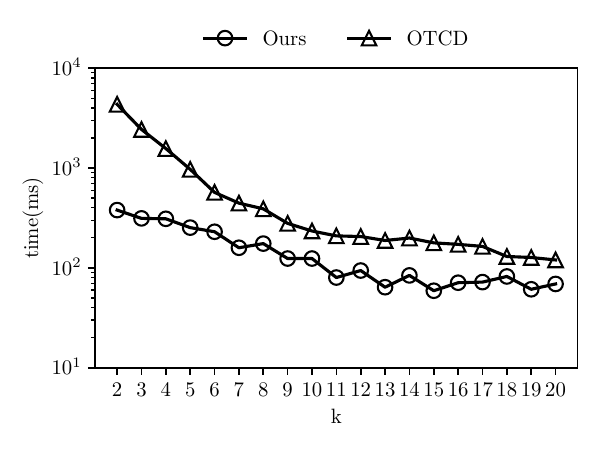} }\hspace{-0.7em}%
    \subcaptionbox{$\alpha=0.8$}{ \includegraphics[width=0.195\textwidth]{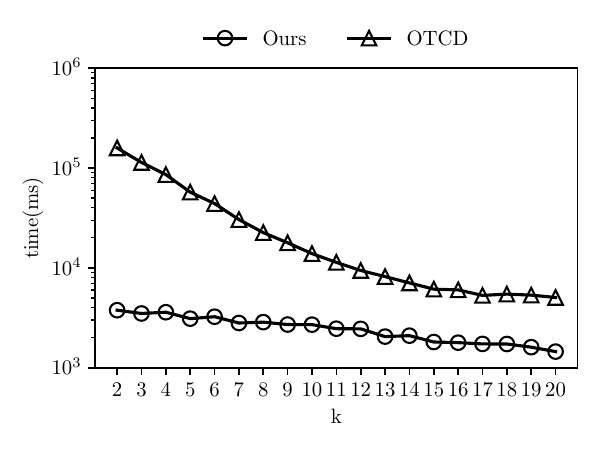} }\hspace{-0.7em}%
    \subcaptionbox{$\alpha=1.0$}{ \includegraphics[width=0.195\textwidth]{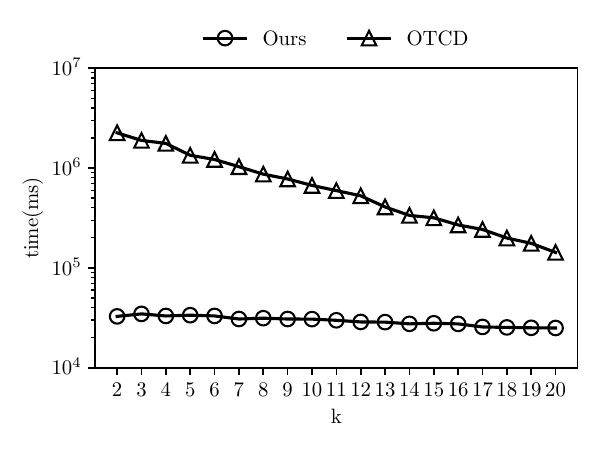} }\hspace{-0.7em}%
    \caption{Runtime performance of \texttt{CoreT} and \texttt{OTCD} on dblp-coauthor (DC)}
    \label{fig:dblp-coauthor}
\end{figure*}
\begin{figure*}
    \centering
    \subcaptionbox{$\alpha=0.2$}{ \includegraphics[width=0.195\textwidth]{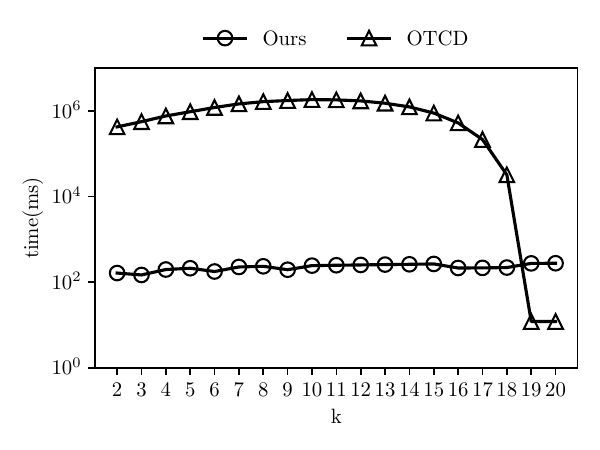} }\hspace{-0.7em}%
    \subcaptionbox{$\alpha=0.4$}{ \includegraphics[width=0.195\textwidth]{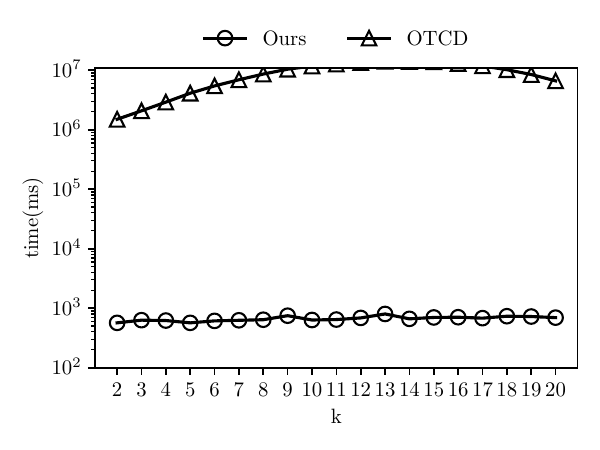} }\hspace{-0.7em}%
    \subcaptionbox{$\alpha=0.6$}{ \includegraphics[width=0.195\textwidth]{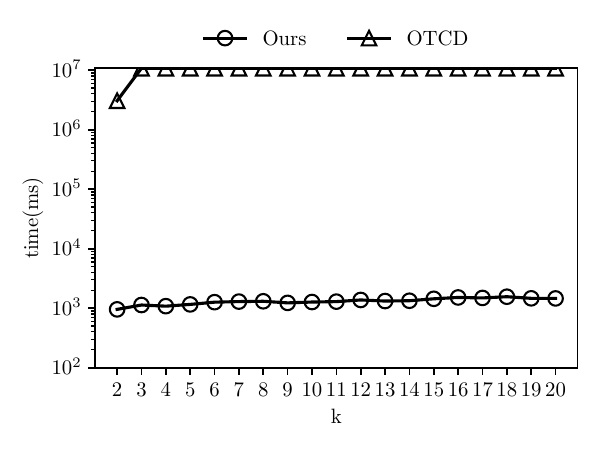} }\hspace{-0.7em}%
    \subcaptionbox{$\alpha=0.8$}{ \includegraphics[width=0.195\textwidth]{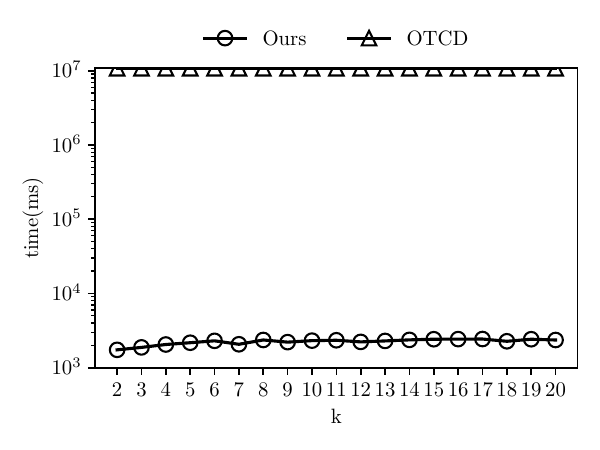} }\hspace{-0.7em}%
    \subcaptionbox{$\alpha=1.0$}{ \includegraphics[width=0.195\textwidth]{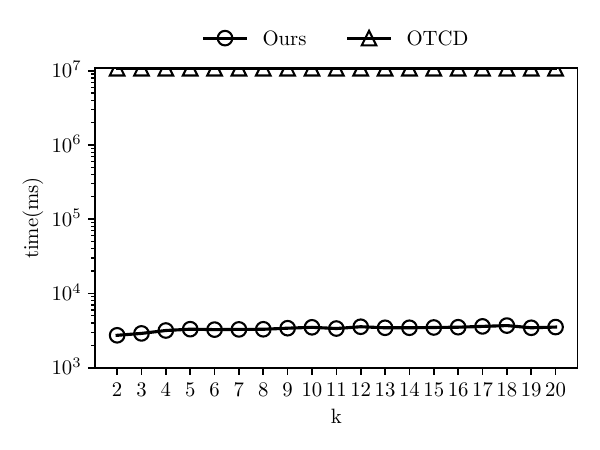} }\hspace{-0.7em}%
    \caption{Runtime performance of \texttt{CoreT} and \texttt{OTCD} on email-Eu-core-temporal (EE)}
    \label{fig:email-Eu-core}
\end{figure*}
\begin{figure*}
    \centering
    \subcaptionbox{$\alpha=0.2$}{ \includegraphics[width=0.195\textwidth]{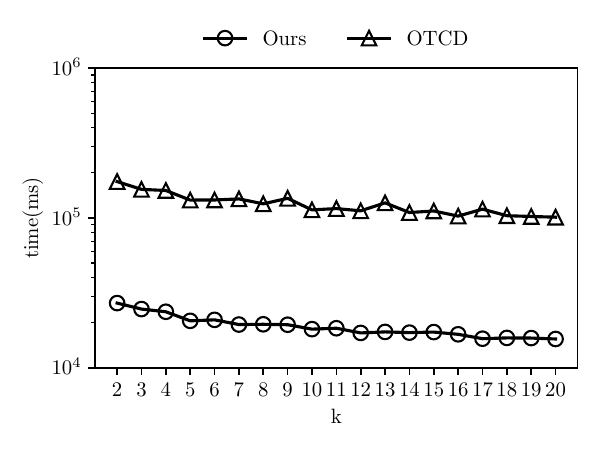} }\hspace{-0.7em}%
    \subcaptionbox{$\alpha=0.4$}{ \includegraphics[width=0.195\textwidth]{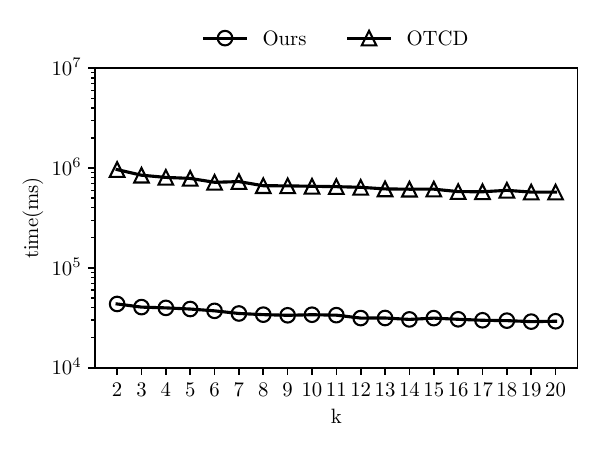} }\hspace{-0.7em}%
    \subcaptionbox{$\alpha=0.6$}{ \includegraphics[width=0.195\textwidth]{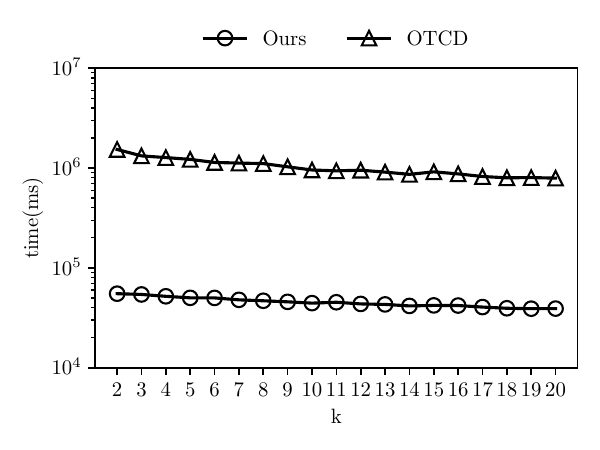} }\hspace{-0.7em}%
    \subcaptionbox{$\alpha=0.8$}{ \includegraphics[width=0.195\textwidth]{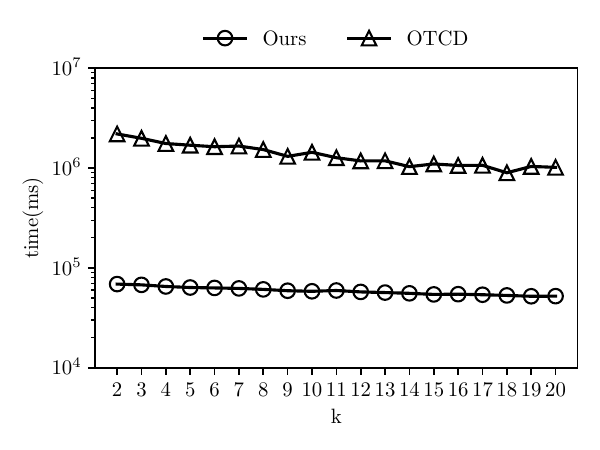} }\hspace{-0.7em}%
    \subcaptionbox{$\alpha=1.0$}{ \includegraphics[width=0.195\textwidth]{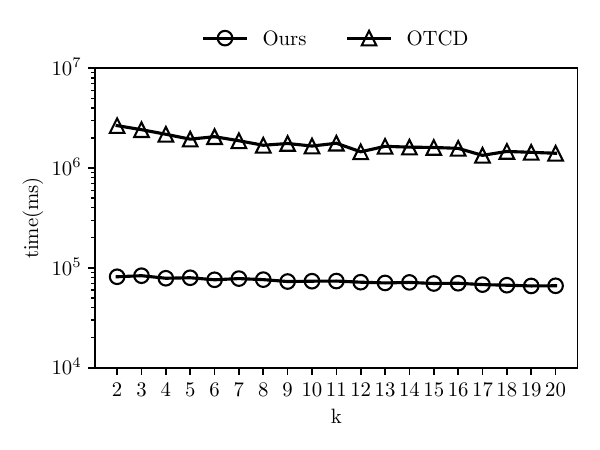} }\hspace{-0.7em}%
    \caption{Runtime performance of \texttt{CoreT} and \texttt{OTCD} on flickr-growth (FG)}
    \label{fig:flickr-growth}
\end{figure*}
\begin{figure*}
    \centering
    \subcaptionbox{$\alpha=0.2$}{ \includegraphics[width=0.195\textwidth]{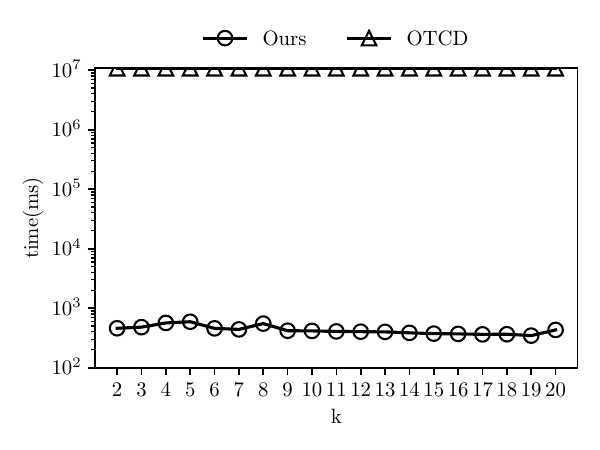} }\hspace{-0.7em}%
    \subcaptionbox{$\alpha=0.4$}{ \includegraphics[width=0.195\textwidth]{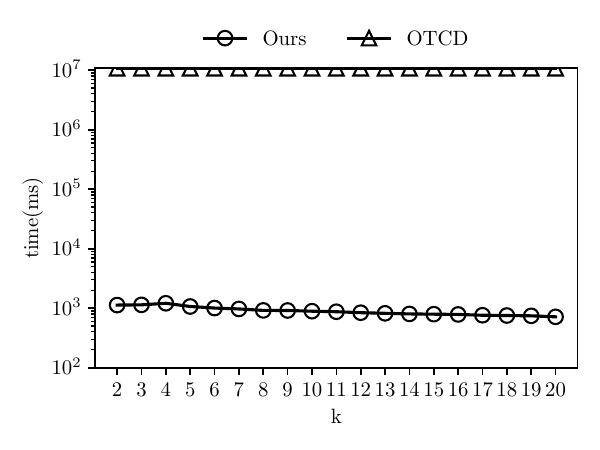} }\hspace{-0.7em}%
    \subcaptionbox{$\alpha=0.6$}{ \includegraphics[width=0.195\textwidth]{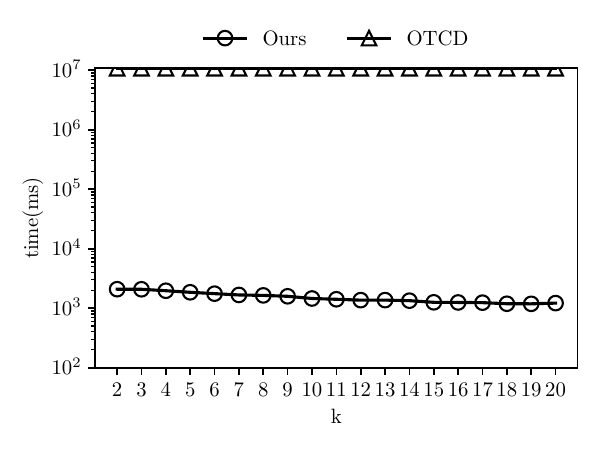} }\hspace{-0.7em}%
    \subcaptionbox{$\alpha=0.8$}{ \includegraphics[width=0.195\textwidth]{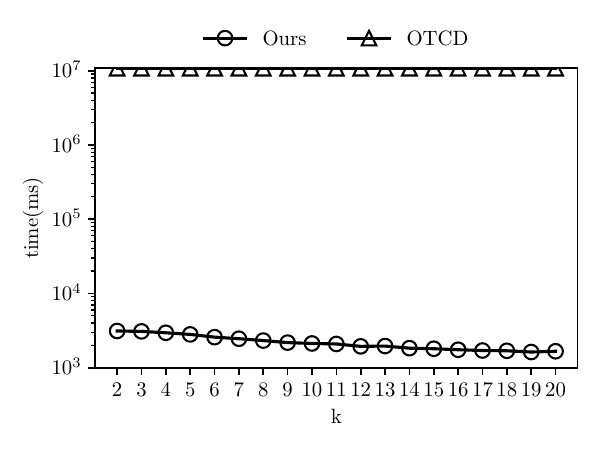} }\hspace{-0.7em}%
    \subcaptionbox{$\alpha=1.0$}{ \includegraphics[width=0.195\textwidth]{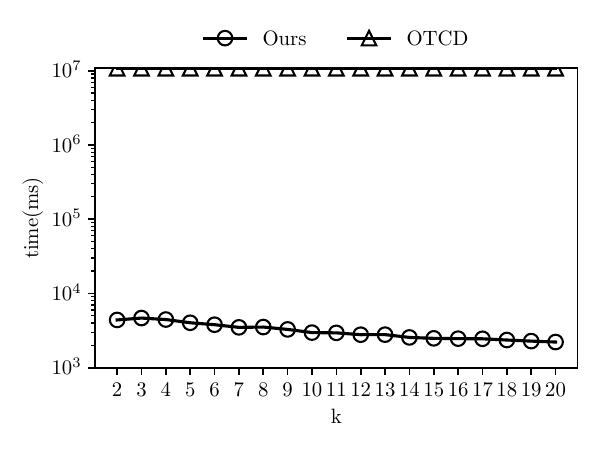} }\hspace{-0.7em}%
    \caption{Runtime performance of \texttt{CoreT} and \texttt{OTCD} on sx-askubuntu (AU)}
\end{figure*}
\begin{figure*}
    \centering
    \subcaptionbox{$\alpha=0.2$}{ \includegraphics[width=0.195\textwidth]{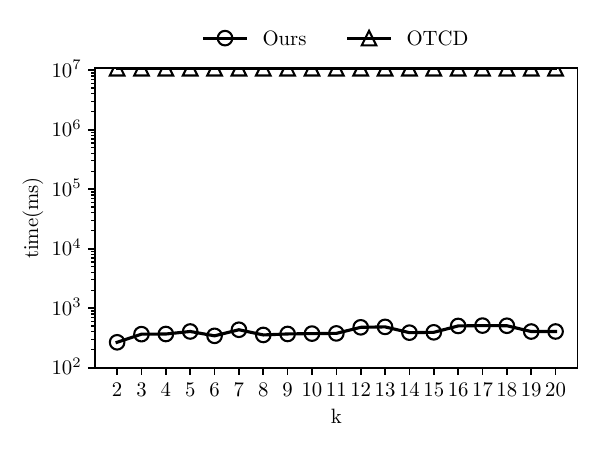} }\hspace{-0.7em}%
    \subcaptionbox{$\alpha=0.4$}{ \includegraphics[width=0.195\textwidth]{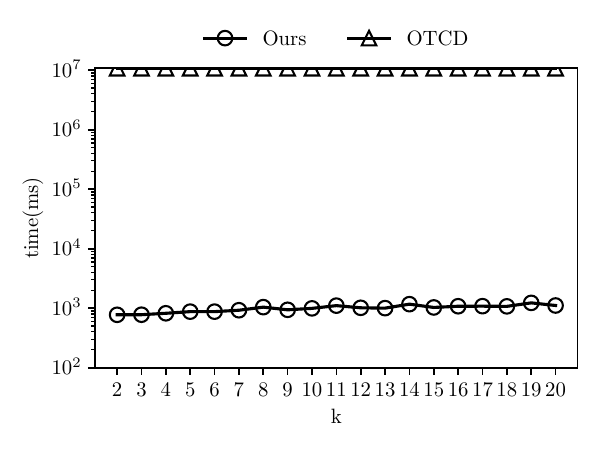} }\hspace{-0.7em}%
    \subcaptionbox{$\alpha=0.6$}{ \includegraphics[width=0.195\textwidth]{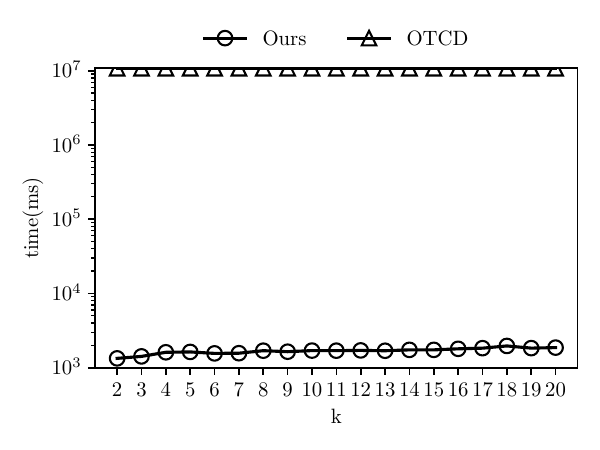} }\hspace{-0.7em}%
    \subcaptionbox{$\alpha=0.8$}{ \includegraphics[width=0.195\textwidth]{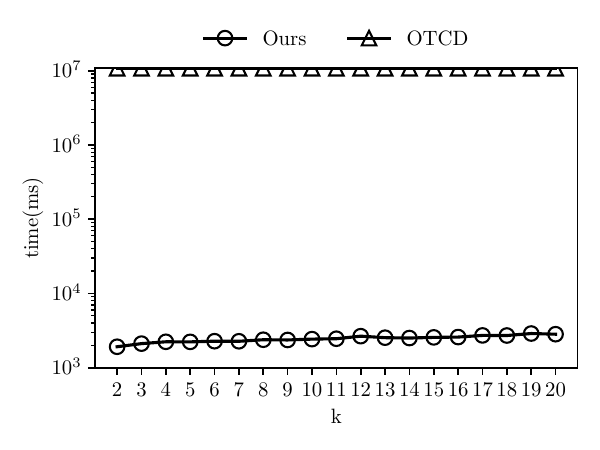} }\hspace{-0.7em}%
    \subcaptionbox{$\alpha=1.0$}{ \includegraphics[width=0.195\textwidth]{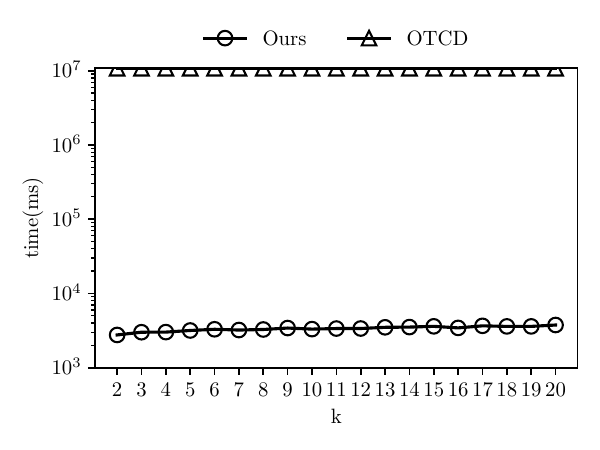} }\hspace{-0.7em}%
    \vspace{-0.1in}
    \caption{Runtime performance of \texttt{CoreT} and \texttt{OTCD} on sx-mathoverflow (MO)}
    \vspace{-0.1in}
\end{figure*}
\begin{figure*}
    \centering
    \subcaptionbox{$\alpha=0.2$}{ \includegraphics[width=0.195\textwidth]{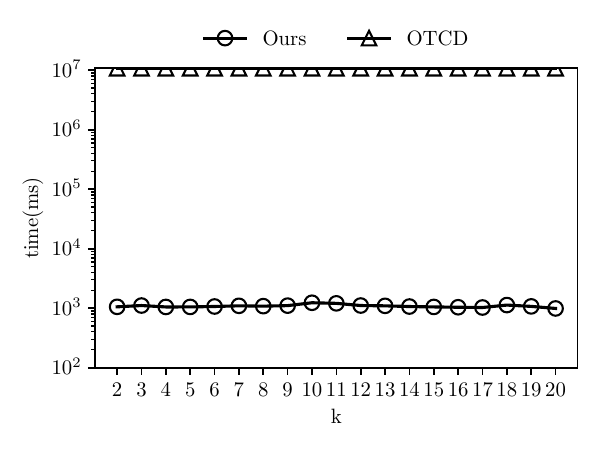} }\hspace{-0.7em}%
    \subcaptionbox{$\alpha=0.4$}{ \includegraphics[width=0.195\textwidth]{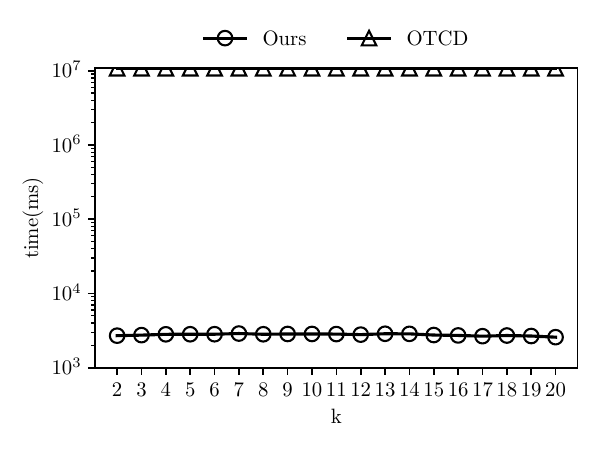} }\hspace{-0.7em}%
    \subcaptionbox{$\alpha=0.6$}{ \includegraphics[width=0.195\textwidth]{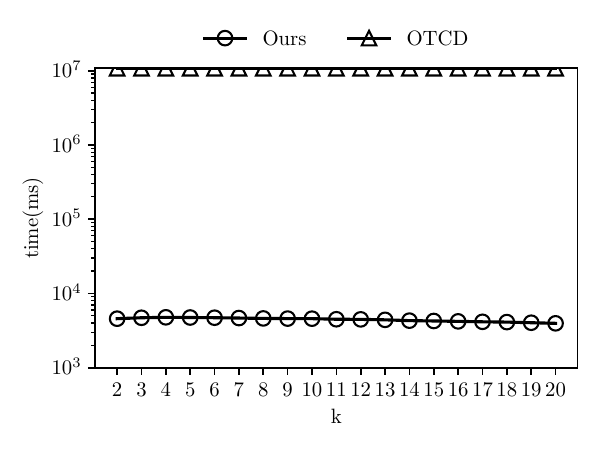} }\hspace{-0.7em}%
    \subcaptionbox{$\alpha=0.8$}{ \includegraphics[width=0.195\textwidth]{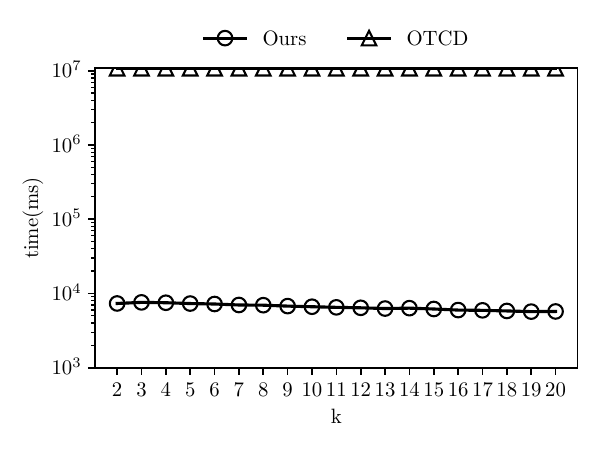} }\hspace{-0.7em}%
    \subcaptionbox{$\alpha=1.0$}{ \includegraphics[width=0.195\textwidth]{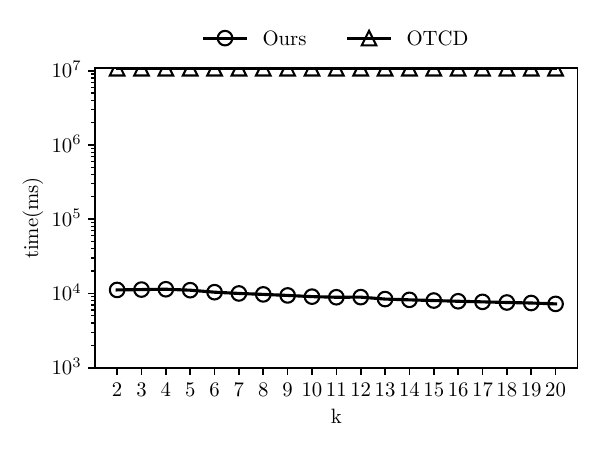} }\hspace{-0.7em}%
    \vspace{-0.1in}
    \caption{Runtime performance of \texttt{CoreT} and \texttt{OTCD} on sx-superuser (SU)}
    \vspace{-0.1in}
\end{figure*}
\begin{figure*}
    \centering
    \subcaptionbox{$\alpha=0.2$}{ \includegraphics[width=0.195\textwidth]{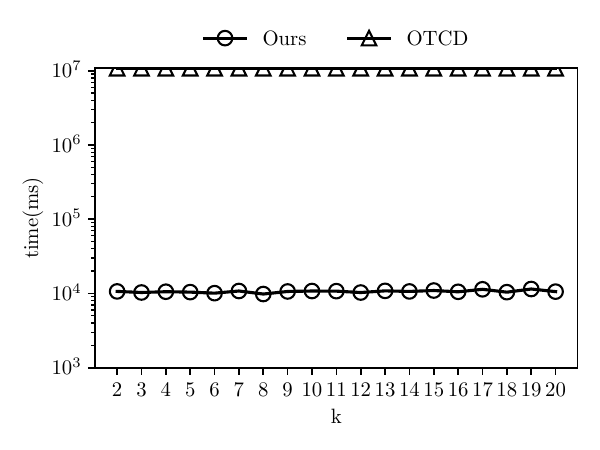} }\hspace{-0.7em}%
    \subcaptionbox{$\alpha=0.4$}{ \includegraphics[width=0.195\textwidth]{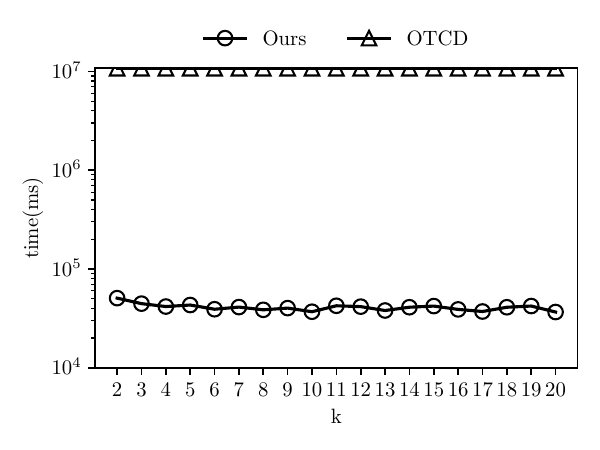} }\hspace{-0.7em}%
    \subcaptionbox{$\alpha=0.6$}{ \includegraphics[width=0.195\textwidth]{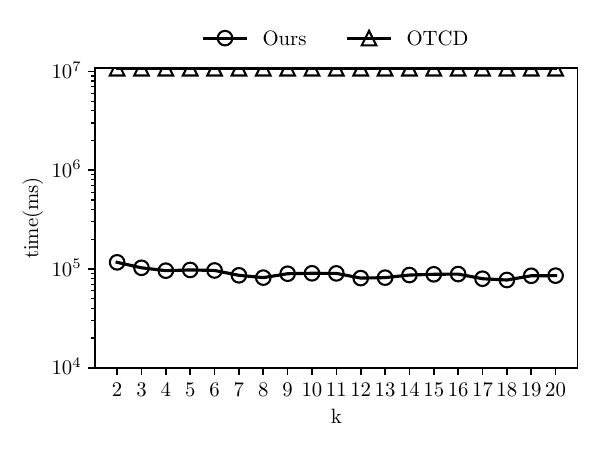} }\hspace{-0.7em}%
    \subcaptionbox{$\alpha=0.8$}{ \includegraphics[width=0.195\textwidth]{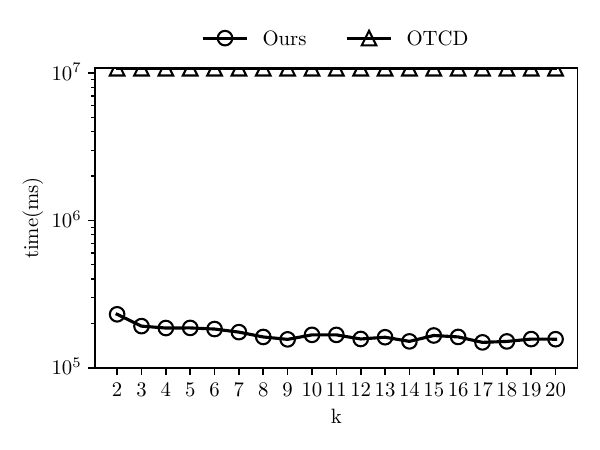} }\hspace{-0.7em}%
    \subcaptionbox{$\alpha=1.0$}{ \includegraphics[width=0.195\textwidth]{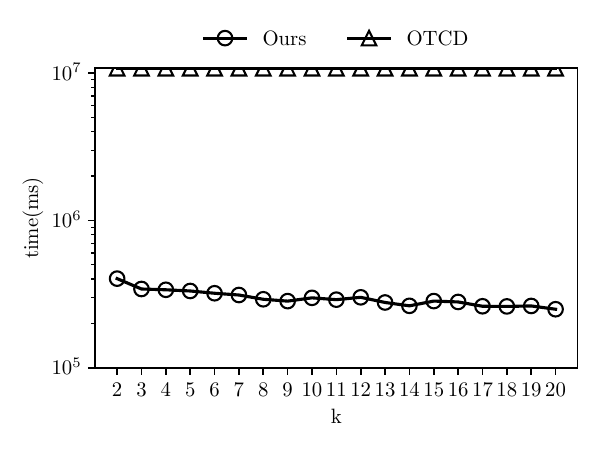} }\hspace{-0.7em}%
    \vspace{-0.1in}
    \caption{Runtime performance of \texttt{CoreT} and \texttt{OTCD} on wiki-talk-temporal (WT)}
    \vspace{-0.1in}
\end{figure*}
\begin{figure*}
    \centering
    \label{fig:core2}
    \subcaptionbox{$\alpha=0.2$}{ \includegraphics[width=0.195\textwidth]{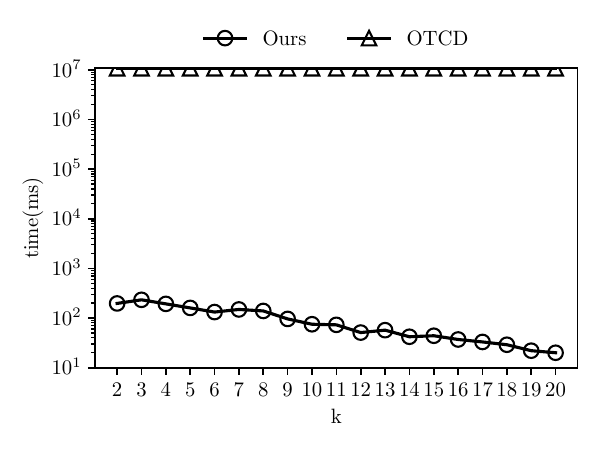} }\hspace{-0.7em}%
    \subcaptionbox{$\alpha=0.4$}{ \includegraphics[width=0.195\textwidth]{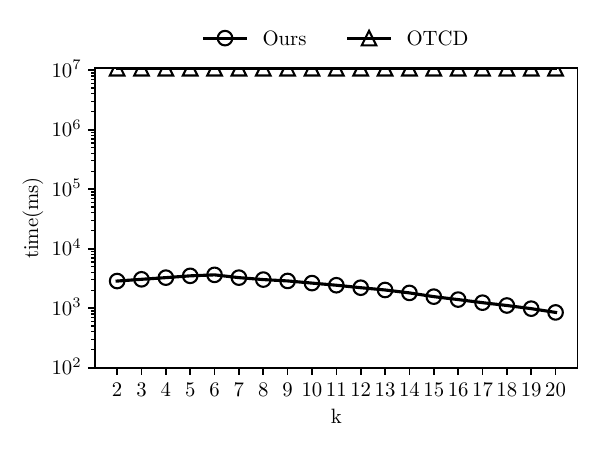} }\hspace{-0.7em}%
    \subcaptionbox{$\alpha=0.6$}{ \includegraphics[width=0.195\textwidth]{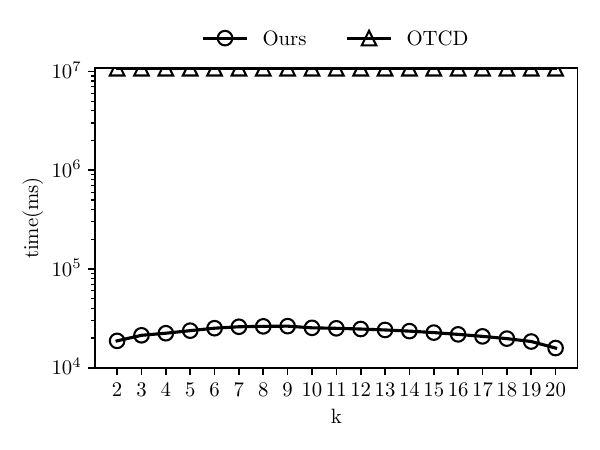} }\hspace{-0.7em}%
    \subcaptionbox{$\alpha=0.8$}{ \includegraphics[width=0.195\textwidth]{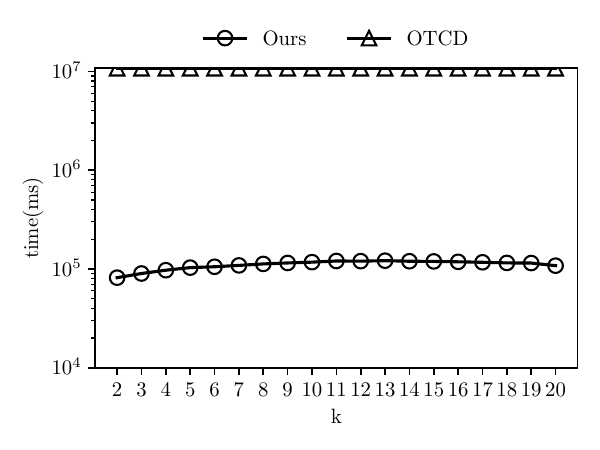} }\hspace{-0.7em}%
    \subcaptionbox{$\alpha=1.0$}{ \includegraphics[width=0.195\textwidth]{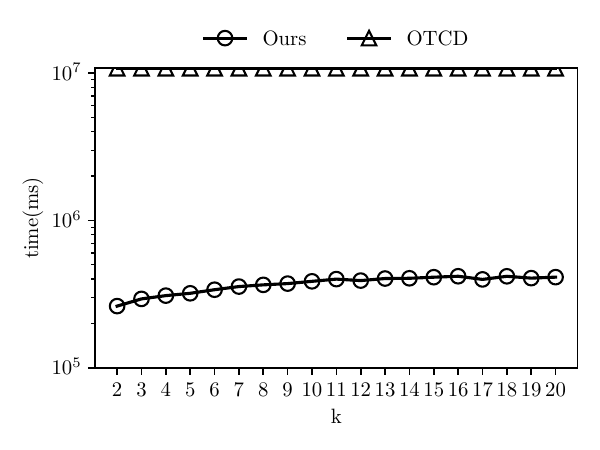} }\hspace{-0.7em}%
    \vspace{-0.1in}
    \caption{Runtime performance of \texttt{CoreT} and \texttt{OTCD} on wikipedia-growth (WG)}
    \label{fig:core-wikipedia-growth}
\end{figure*}

\section{Related Work}
\label{sec:related}

Temporal $k$-core queries and decompositions have attracted significant attention in recent years. Comprehensive surveys of temporal graphs are provided in~\cite{holme2015modern,wang2019time}, while detailed reviews on core decomposition can be found in~\cite{kong2019k,malliaros2020core}.

\noindent
\underline{\textbf{$k$-core variants in temporal graphs.}} The work most closely related to ours is presented in~\cite{yang2023scalable}, which addresses the same problem. Building upon this, the authors subsequently developed an index structure in~\cite{yang2024evolution} to support temporal $k$-core component search for any given vertex and time interval.

Beyond the temporal $k$-core, several other variants of $k$-core models were proposed for temporal graphs. For instance, Yu et al.~\cite{yu2021querying} introduced the concept of historical $k$-core and developed the PHC-index to efficiently identify such structures. Building on this idea, Tan et al.~\cite{tan2024efficient} generalized the PHC-index to support historical $k$-truss queries. Subsequently, Wang et al.~\cite{wangmore} proposed optimizations to the PHC-index for scenarios where the resulting historical $k$-core is small. In addition, Wu et al.~\cite{wu2015core} introduced the $(k,h)$-core, which requires at least $h$ parallel edges between neighbors, while Bai et al.~\cite{BAI2020324} investigated its maintenance in massive temporal graphs. Galimberti et al.~\cite{galimberti2018mining} defined span-cores, which require edge persistence throughout a continuous interval, whereas Li et al.~\cite{li2018persistent} introduced persistent $k$-cores that have to remain valid for a specified time span. By abstracting these models together with the temporal $k$-core, Conte et al.~\cite{conte2024k} proposed the unified $(k,h,\Delta)$-core, demonstrating its ability to reveal meaningful structures in temporal graphs. Furthermore, Oettershagen et al.~\cite{oettershagen2025edge} introduced the $(k,\Delta)$-core and $(k,\Delta)$-truss, and present a general decomposition framework for their identification. Temporal communities have also been explored through alternative definitions. For example, Qin et al.~\cite{qin2022mining} defined the $(\ell,\delta)$-maximal dense core, which requires a minimum density sustained over a time interval of length $\ell$, while Huang et al.~\cite{hung2021maximum} introduced the $(L,k)$-lasting core, a vertex set forming a $k$-core that persists for $L$ time units. Other studies have examined time-aware properties such as continuity \cite{li2021efficient} and reliability \cite{tang2022reliable}. In a different direction, Momin et al.~\cite{momin2023kwiq} introduced the notion of core-invariant nodes, which are nodes whose core numbers remain above a given threshold throughout a specified time window.

\noindent
\underline{\textbf{Other temporal cohesive subgraph models.}} Beyond $k$-core models, numerous cohesive subgraph models have been extended to temporal settings. Among these, periodic cliques were first studied by~\cite{li2018persistent}; subsequently, this notion was generalized into quasi-periodic cliques by~\cite{qin2020periodic}. With regard to temporal cliques, Himmel et al.~\cite{himmel2016enumerating} enumerated all maximal $\Delta$-cliques, defined as vertex subsets where every pair is connected within some time interval of length $\Delta$. Later, Brunelli et al.~\cite{brunelli2025output} present an output-sensitive algorithm for enumerating maximal cliques in temporal graphs under the single-interval model, achieving polynomial delay per output and incremental polynomial time with additional space. 
In another line of work, Xie et al.~\cite{xie2023querying} explored connected component queries in both temporal directed and undirected graphs. 
Finally, Chen et al.~\cite{chen2024querying} considered the maintenance of structural diversity among vertices in temporal graphs, which addresses a different aspect of temporal graph analysis.
\section{Conclusion}
In this work, we revisited the temporal $k$-core query problem and introduced \texttt{CoreT}. Specifically, \texttt{CoreT} is a novel core-time-based algorithm whose time complexity is linear in both the number of temporal edges within the query interval and the duration of the interval. Experiments on real-world datasets show that \texttt{CoreT} achieves up to four orders of magnitude speedup over the state-of-the-art while maintaining robust performance across diverse graph sizes and temporal settings. In future work, we plan to extend our approach to enumerate higher-order cohesive structures in temporal graphs, such as temporal $k$-trusses.

\cleardoublepage
\bibliographystyle{IEEEtranS}
\bibliography{ref}

\end{document}